\documentclass[journal, twoside]{IEEEtran}
\usepackage{hyperref,graphicx}
%
\usepackage{amsmath}
\usepackage{amssymb}
\usepackage{amsthm}
\usepackage{bm} 
\usepackage{dirtytalk} 
\usepackage{booktabs} 
\usepackage{multirow}
\usepackage{booktabs}
\usepackage{cleveref}
\usepackage{cite}
\usepackage{xcolor}
\usepackage{lipsum}
\newcommand{\bbR}{\mathbb{R}}
\newcommand{\bbC}{\mathbb{C}}
\newcommand{\bbH}{\mathbb{H}}
\newcommand{\bbCi}{\mathbb{C}_{\bm i}}
\newcommand{\bbCj}{\mathbb{C}_{\bmj}}

\newcommand{\domega}{\mathrm{d}\nu}
\newcommand{\dnu}{\mathrm{d}\nu}
\newcommand{\bmj}{\bm j}
\newcommand{\bmi}{\bm i}
\newcommand{\bmk}{\bm k}
\newcommand{\bmmu}{\bm \mu}

\renewcommand{\epsilon}{\varepsilon}
\renewcommand{\phi}{\varphi}

\newcommand{\defeq}{:=}

\newcommand{\ie}{\emph{i.e. }}
\newcommand{\Expe}[1]{\mathbf{E}\left\lbrace #1\right\rbrace}
\newcommand{\dX}{\mathrm{d}X}
\newcommand{\dU}{\mathrm{d}U}
\newcommand{\dV}{\mathrm{d}V}
\newcommand{\involj}[1]{#1^{\star\bmj}}
\newcommand{\involmu}[1]{#1^{\star\bmmu}}
\newcommand{\involi}[1]{#1^{\star\bmi}}
\newcommand{\involk}[1]{#1^{\star\bmk}}
\newcommand{\polarmodj}[1]{\left\vert #1 \right\vert_{\bmj}^2}
\newcommand{\Span}[1]{\mathrm{span}\left\lbrace #1\right\rbrace}

\newcommand{\intnu}{\int_{-1/2}^{+1/2}}

\newtheorem{theorem}{Theorem}
\newtheorem{proposition}{Proposition}

\begin{document}

\title{Spectral analysis of stationary\\ random bivariate signals}

\author{Julien Flamant, Nicolas Le Bihan and Pierre Chainais}

\author{Julien~Flamant,~\IEEEmembership{Student Member,~IEEE,}
        Nicolas~Le Bihan,
        and~Pierre~Chainais,~\IEEEmembership{Senior~Member,~IEEE}
\thanks{J. Flamant and P. Chainais are with Univ. Lille, CNRS, Centrale Lille, UMR 9189 - CRIStAL - Centre de Recherche en Informatique Signal et Automatique de Lille, 59000 Lille, France. N. Le Bihan is with CNRS/GIPSA-Lab, 11 Rue des math\'{e}matiques, Domaine Universitaire, BP 46, 38402 Saint Martin d'H\`{e}res cedex, France. Part of this work has been funded by the CNRS, GDR ISIS, within the SUNSTAR interdisciplinary research program.}}%

\markboth{}%
{Flamant, Le Bihan and Chainais: Spectral analysis of stationary random bivariate signals}

\maketitle

\begin{abstract}
	A novel approach towards the spectral analysis of stationary random bivariate signals is proposed. Using the Quaternion Fourier Transform, we introduce a quaternion-valued spectral representation of random bivariate signals seen as complex-valued sequences. This makes possible the definition of a scalar quaternion-valued spectral density for bivariate signals. This spectral density can be meaningfully interpreted in terms of frequency-dependent polarization attributes. A natural decomposition of any random bivariate signal in terms of unpolarized and polarized components is introduced. Nonparametric spectral density estimation is investigated, and we introduce the polarization periodogram of a random bivariate signal. Numerical experiments support our theoretical analysis, illustrating the relevance of the approach on synthetic data. 
\end{abstract}

\begin{IEEEkeywords}
stationary random bivariate signals, polarization, Stokes parameters, degree of polarization
\end{IEEEkeywords}
\section{Introduction}
	\IEEEPARstart{R}{andom} bivariate signals are 2D vector timeseries. They appear in a large variety of applications, ranging  from oceanography \cite{gonella1972rotary,mooers1973technique}, to optics \cite{erkmen2006optical}, radar \cite{Ahrabian2013}, geophysics \cite{samson1983pure} or EEG analysis \cite{sakkalis2011review} to name but a few. A bivariate signal is usually decomposed in two orthogonal components $u[t]$ and $v[t]$. Thus a bivariate signal $x[t]$ can be either represented as the vector signal $x[t] = (u[t],\: v[t])^T \in \bbR^2$ or the complex-valued signal $x[t] = u[t] + \bmi v[t]$. 
	
	The statistical analysis of signals with vector-valued samples can be carried out using standard multivariate time series analysis techniques (see \emph{e.g} \cite[chap. 9]{priestley1981spectral} or \cite[chap. 11]{brockwell2013time}), bivariate signals are no exception. However, in the signal processing community, bivariate signals have often been described using complex-valued models \cite{Picinbono1997b,POA96I,POA96II,schreier2010statistical,Adali2011}. To account for the full second-order statistical characterization of the complex signal $x[t] = u[t] + \bmi v[t]$, a usual approach is to define two quantities: the usual autocovariance function and the complementary-covariance function (the relation function in \cite{Picinbono1997b}). This leads to the definition of the augmented vector $(x[t], \overline{x[t]})^T \in \bbC^2$ from the signal and its conjugate, and to the related augmented covariance and spectral density matrices \cite[chap. 8]{schreier2010statistical}.

	The {\em rotary spectrum} analysis \cite{gonella1972rotary,mooers1973technique} is a well-known technique rooted in oceanographic studies. It is based on the decomposition of a complex-valued signal into clockwise and counterclockwise rotating components. This seminal approach has stimulated many theoretical developments \cite{Schreier2008,walden2013rotary,Walden2007, Rubin-Delanchy2008,Chandna2011,Chandna2013}.
	The rotary components method is also related to polarization analysis \cite{Schreier2008, schreier2010statistical}. The focus on polarization or rotary components usually depends on the field of application: rotary components are more common in oceanography, while optics and radar scientists usually deal with polarization \cite{brosseau1998fundamentals,Brosseau1995statistics,Barakat1987,Barakat1985,Giuli1986}.


	Existing approaches all rely on the same spectral representation of stationary bivariate processes, which is based on the standard Fourier Transform (FT). For complex signals, it means that negative frequencies must be taken into account, as they provide information about the process. Therefore one needs to consider spectral matrices rather than a scalar spectral density. As a consequence, meaningful physical parameters are not directly ``readable'' in the state-of-the-art formulations. 
	

	We propose a new approach to analyze the spectral content of stationary random bivariate signals. It is based on recent results from \cite{1609.02463,flamant2017polarization} and extended to the case of stationary bivariate signals seen as complex-valued signals. This paper provides a well-suited framework for the analysis of stationary bivariate signals which naturally describes the spectral content and the ``geometric'' (also sometimes called polarization) content of bivariate signals. Thanks to the definition of the dedicated Quaternion Fourier Transform (QFT), it is possible to describe the spectral content of such signals in terms of polarized and unpolarized parts, which both encode meaningful information about the signal.


	This paper structure is as follows. Section \ref{section:quaternionSec2} reviews the necessary material regarding quaternions and the QFT. Section \ref{sec:spectralRep} gives the central results of this paper: we introduce the scalar quaternion-valued spectral density of a bivariate signal and its subsequent properties. Results are compared with state-of-the art approaches. In particular the differences between second-order circularity, also called properness, and polarization are stressed. Simple explicit examples are finally presented. Section \ref{sec:spectral_density_estimation} deals with nonparametric spectral density estimation, and introduces the polarization periodogram. Our theoretical analysis is supported by numerical experiments in Section \ref{sec:examples}. Section \ref{sec:conclusion} gathers concluding remarks.


\section{Quaternion Fourier Transform}\label{section:quaternionSec2}

	\subsection{Quaternion algebra}\label{section:quaternionAlgebra}

	We review the basic material regarding quaternions and refer to more detailed textbooks (\emph{e.g.} \cite{conway2003quaternions}) for a complete overview. Quaternions form a four dimensional noncommutative algebra. Any quaternion $q\in \bbH$ can be written in its Cartesian form as
	\begin{equation}\label{eq:Cartesianformquaternion}
		q = a + b\bmi + c\bmj + d\bmk,
	\end{equation}
	where $a, b, c, d \in \bbR$ and $\bmi, \bmj, \bmk$ are roots of $-1$ satisfying
	\begin{equation}
		\bmi^2 = \bmj^2 = \bmk^2 = \bmi\bmj\bmk = -1.
	\end{equation}
	The canonical elements $\bmi, \bmj, \bmk$, together with the identity of $\bbH$ form the quaternion canonical basis given by $\left\lbrace 1, \bmi, \bmj, \bmk\right\rbrace$. We will use the notation $\mathcal{S}(q) = a \in \bbR$ to define the \emph{scalar part} of the quaternion $q$, and $\mathcal{V}(q) = q - \mathcal{S}(q) \in \Span{\bmi, \bmj, \bmk}$ to denote its \emph{vector part}. We can define the real and imaginary parts of a quaternion $q$ as $\mathfrak{R}(q) = a, \: \mathfrak{I}_{\bmi}(q) = b, \:  \mathfrak{I}_{\bmj}(q) = c, \: \mathfrak{I}_{\bmk}(q) = d.$
	A quaternion is called \emph{pure} if its real (or scalar) part is equal to zero, that is $a = 0$, \emph{e.g.} $\bmi, \bmj, \bmk$ are pure quaternions. The quaternion conjugate of $q$ is $\overline{q} = \mathcal{S}(q) - \mathcal{V}(q)$. The modulus of a quaternion $q \in \bbH$ is defined by $\vert q \vert^2 = q\overline{q} = \overline{q}q = a^2+b^2+c^2+d^2$. The inverse of a non-zero quaternion is defined by $q^{-1} = \overline{q}/\vert q\vert^2$. Importantly quaternion multiplication is noncommutative, that is in general for $p, q \in \bbH$, one has $pq \neq qp$. 
	Involutions with respect to $\bmi, \bmj, \bmk$ are defined as $\overline{q}^{\bmi} = - \bmi q \bmi, \: \overline{q}^{\bmj} = - \bmj q \bmj, \: \overline{q}^{\bmk} = - \bmk q \bmk$.
	The combination of conjugation and involution with respect to an arbitrary pure quaternion $\bmmu$ is denoted by $\involmu{q} \defeq \overline{\left(\overline{q}\right)}^{\bmmu} = \overline{\left(\overline{q}^{\bmmu}\right)}$ and for instance $\involj{( a + b\bmi + c\bmj + d\bmk)} = a + b\bmi - c\bmj + d\bmk$. For later use, we also introduce the notation $\polarmodj{q} \defeq q\involj{q}$.

	Quaternions encompass complex numbers. One can construct \emph{complex subfields} of $\bbH$, \emph{e.g…} $\bbCj = \Span{1, \bmj}$ or $\bbCi = \Span{1, \bmi}$ which are isomorphic to $\bbC$. Any quaternion can be seen as a pair of complex numbers: let us mention the symplectic decomposition $q = q_1 + \bmi q_2, \: q_1, q_2 \in \bbCj$, where the quaternion $q$ is splitted into two $\bbCj$-valued complex numbers. This form is particularly suited for computations performed later on with the quaternion Fourier transform. 

	Polar forms of quaternions exist. For an arbitrary pure unit quaternion $\bmmu$ and $\theta \in \bbR$, we have $\exp(\bmmu\theta) = \cos\theta + \bmmu\sin\theta$. It generalizes the notion of complex exponentials and the following polar form was proposed in \cite{bulow2001hypercomplex}:
	\begin{equation}\label{eq:polarForm}
		q = \vert q \vert\exp[\bmi \theta]\exp[-\bmk\chi]\exp[\bmj\phi],
	\end{equation}
	with $(\theta, \chi, \phi) \in [-\pi/2, \pi/2]\times [-\pi/4, \pi/4]\times [-\pi, \pi]$. This form is particularly useful for quaternion embedding of complex signals, see \cite{1609.02463}, and in the spectral description of stationary signals in Section \ref{sub:examplesSpecRep}.
	
	\subsection{Quaternion Fourier Transform}
	We review here briefly the Quaternion Fourier Transform (QFT) introduced in \cite{le2014instantaneous} and studied in detail recently in \cite{1609.02463}. We refer the reader to these articles for proofs and a detailed presentation.

	Here we consider only discrete-time (DT) signals: $t$ is a time index such that $x(t\Delta) = x[t]$, where $\Delta$ is the sampling step. We assume $\Delta = 1$ in the rest of this paper. 

	We define the Discrete Time Quaternion Fourier Transform (hereafter denoted QFT) of \emph{axis} $\bmj$ of a signal $x: \bbR \rightarrow \bbH$ by
	\begin{equation}\label{eq:definitionQFT}
		X(\nu) \defeq \sum_{t=-\infty}^{+\infty} x[t]\exp(-\bmj2\pi\nu t).
	\end{equation}
	The inverse QFT is given by
	\begin{equation}\label{eq:inverseQFT}
		x[t] = \intnu X(\nu)\exp(\bmj2\pi\nu t)\mathrm{d}\nu.
	\end{equation}
	The above relations are directly obtained by discretizing the continuous-time QFT presented in \cite{1609.02463,le2014instantaneous} using similar arguments as with the usual FT.

	Write $x[t] = u[t] + \bmi v[t]$, $u[t], v[t] \in \bbCj$, its QFT reads
	\begin{equation}
		X(\nu) = U(\nu) + \bmi V(\nu), \quad U(\nu), V(\nu) \in \bbCj,
	\end{equation}
	where $U(\nu), V(\nu)$ are the standard FTs of $u, v$: the QFT is performing two standard FT. This may explain why this QFT shares most properties of the classical FT, see \cite{1609.02463}. 

	Another benefit of the QFT is that it provides a well-suited framework for bivariate signals. A bivariate signal can be written as a $\bbCi$-valued signal $x[t] = u[t] + \bmi v[t]$, with $u, v$ real signals. The QFT of such a signal exhibits an $\bmi$-hermitian symmetry \cite{le2014instantaneous}:
	\begin{equation}\label{eq:ihermitian}
		X(-\nu) = \overline{X(\nu)}^{\bmi}.
	\end{equation}
	Eq. (\ref{eq:ihermitian}) shows that, when using the QFT with $x[t] \in \bbCi$, negative frequencies carry no additional information to positive frequencies about the signal. 


	The $\bmi$-hermitian symmetry (\ref{eq:ihermitian}) permits the construction of the \emph{quaternion embedding of a complex signal}, by canceling out negative frequencies of the spectrum. The quaternion embedding of a complex signal is a direct bivariate counterpart of the usual analytic signal and permits to identify both instantaneous phase and polarization (\ie geometric) properties of a complex signal.
\section{Spectral representation of bivariate stationary random processes}\label{sec:spectralRep}


	Any bivariate discrete-time random process $x[t]$ can be decomposed as $x[t] = u[t] + \bmi v[t]$, where $u[t], v[t]$ are real-valued discrete-time random processes. The process $x[t]$ is said to be \emph{second-order stationary} if $u[t]$ and $v[t]$ are \emph{jointly second-order stationary}, that is \cite[p. 655]{priestley1981spectral}:
	\begin{align}
		\Expe{x[t]} &= \Expe{u[t]} + \bmi \Expe{v[t]}= m \in \bbCi,\\
		R_{uu}[t, \tau] &= \Expe{u[t]u[t-\tau]} = R_{uu}[\tau], \: R_{uu}[0]< \infty,\\
		R_{vv}[t, \tau] &= \Expe{v[t]v[t-\tau]} = R_{vv}[\tau], \: R_{vv}[0]< \infty,\\
		R_{uv}[t, \tau] &= \Expe{u[t]v[t-\tau]} = R_{uv}[\tau].
	\end{align}
	Here $\Expe{\cdot}$ denotes the mathematical expectation and $R_{uu}, R_{vv}$ and $R_{uv}$ denote usual autocovariance sequences (ACVS) and crosscovariance sequences (CCVS) between real-valued sequences. Second-order stationarity ensures that first and second-order moments are finite and do not depend on $t$.

	Let us make some further assumptions. First all processes are assumed to be zero-mean, \ie $m = 0$. Also, we suppose that any of the ACVS or CCVS are absolutely summable, so that their usual Fourier transform exist. Second-order stationarity is simply referred to as \emph{stationarity} in the sequel.

	\subsection{Main result}

	Using the QFT, we derive a spectral representation theorem for any bivariate stationary process $x[t] = u[t] + \bmi v[t]$. The existence of the spectral increments $\mathrm{d}X(\nu)$ follows from the existence of the (usual) spectral increments of $u[t]$ and $v[t]$, see for instance \cite[p. 36]{Yaglom1962}, \cite[p. 246]{priestley1981spectral} or \cite[p. 344]{Blanc-Lapierre1953}.
Namely, we can write $x[t]$ as the (quaternion) Fourier-Stieltjes integral
	\begin{equation}
		x[t] = \intnu \mathrm{d}X(\nu)\exp(\bmj 2\pi\nu t),
	\end{equation}
 	where the spectral increments $\mathrm{d}X(\nu) = X(\nu + \dnu) - X(\nu)$ are quaternion-valued and $X(\nu)$ is an independent additive random measure. 
	The existence roots in two main properties. The QFT restricted to $\bbCj$-valued processes is isomorphic to the standard FT so that the spectral increments $\mathrm{d}U(\nu), \mathrm{d}V(\nu)$ of $u[t], v[t]$ are $\bbCj$-valued. Moreover the QFT is left-quaternion-linear, that is $\forall \lambda \in \bbH$, the QFT of $\lambda x[t]$ is $\lambda X(\nu)$ so that
 	\begin{equation}
 		\mathrm{d}X(\nu) = \mathrm{d}U(\nu) + \bmi\mathrm{d}V(\nu).
 	\end{equation}
 	

	\begin{theorem}[Spectral representation of bivariate stationary random processes]\label{thm:spectralRep}
		Let $x[t] = u[t] + \bmi v[t]$ be a bivariate stationary process. Suppose that $u[t]$ and $v[t]$ are both harmonizable. Then there exists a quaternion-valued orthogonal process $X(\nu)$ such that
		\begin{equation}\label{eq:spectralRepCT}
			x[t] = \intnu \mathrm{d}X(\nu)\exp(\bmj 2\pi\nu t),
		\end{equation}
		the integral being defined in the mean-square sense. The process $X(\nu)$ has the following properties:
		\begin{enumerate}
			\item $\forall \nu,\: \Expe{\dX(\nu)} = 0$, 

			\item $\forall \nu, \: \Expe{\vert\dX(\nu)\vert^2} + \Expe{\polarmodj{\dX(\nu)}} \bmj = \Gamma_{xx}(\nu)\dnu$, where $\Gamma_{xx}(\nu)$ is the spectral density of $x$,


			\item For any $\nu \neq \nu'$, we have
			$$\Expe{\dX(\nu)\overline{\dX(\nu')}} = \Expe{\dX(\nu)\involj{\dX(\nu')}} = 0,$$
			which shows that the spectral increments $dX(\nu)$ are two times orthogonal.
		\end{enumerate}
	\end{theorem}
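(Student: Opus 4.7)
The strategy is to lift the classical Cram\'er spectral representation of real jointly stationary processes to the quaternion setting by exploiting two facts established in Section~\ref{section:quaternionSec2}: the QFT of axis $\bmj$ restricted to $\bbCj$-valued signals coincides with the standard complex FT, and the QFT is left-$\bbH$-linear.

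First I would apply the classical spectral theorem to $u[t]$ and $v[t]$: since they are real, zero-mean, jointly stationary, and harmonizable, there exist $\bbCj$-valued orthogonal random measures $U(\nu), V(\nu)$ with $u[t] = \intnu \dU(\nu) \exp(\bmj 2\pi\nu t)$ and similarly for $v[t]$, together with the standard orthogonality $\Expe{\dU(\nu)\overline{\dU(\nu')}} = 0$ and analogous relations for $\dV$ and for the cross-product $\dU\,\overline{\dV}$ whenever $\nu \neq \nu'$. Setting $X(\nu) \defeq U(\nu) + \bmi V(\nu)$ and invoking left-$\bbH$-linearity of the QFT gives $\dX(\nu) = \dU(\nu) + \bmi\dV(\nu)$, and plugging into $x[t] = u[t] + \bmi v[t]$ immediately produces the spectral representation (\ref{eq:spectralRepCT}). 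Property~1 then follows by linearity of expectation since $\Expe{\dU} = \Expe{\dV} = 0$.

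The meat of the proof is Property~2. Here the key algebraic tool is the commutation rule $\bmi q = \overline{q}\,\bmi$ valid for any $q \in \bbCj$, which allows me to move $\bmi$ factors across $\bbCj$-valued spectral increments at the cost of a $\bbCj$-conjugation. Expanding
\begin{equation*}
\dX\,\overline{\dX} = (\dU + \bmi\dV)(\overline{\dU} - \overline{\dV}\,\bmi)
\end{equation*}
and similarly $\dX\involj{\dX}$ using $\involj{\dX} = \overline{\overline{\dX}^{\bmj}}$, then taking expectations and applying the commutation identity, reduces both scalar quaternion quantities to linear combinations of $\Expe{|\dU|^2}$, $\Expe{|\dV|^2}$, $\Expe{\dU\,\overline{\dV}}$ and $\Expe{\overline{\dU}\,\dV}$. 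The combination $\Expe{|\dX|^2} + \Expe{\polarmodj{\dX}}\bmj$ packages these four (real and $\bbCj$-valued) contributions into a single scalar quaternion whose four canonical components are, up to conventional signs, the FTs of $R_{uu}+R_{vv}$, $R_{uu}-R_{vv}$ and the symmetric and antisymmetric parts of $R_{uv}$---a quaternion repackaging of the usual $2\times 2$ spectral matrix, and precisely the content of $\Gamma_{xx}(\nu)\dnu$.

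Property~3 is proved by the same expansion combined with the classical orthogonality of $\dU$ and $\dV$ at distinct frequencies. Because two independent quaternion sesquilinear forms on the increments (one using $\overline{\cdot}$, the other using $\cdot^{\star\bmj}$) jointly encode all real cross-moments of the four real components of $\dX(\nu)$ against those of $\dX(\nu')$, the simultaneous vanishing of both products is equivalent to the classical joint orthogonality, which holds by assumption. The main obstacle is the bookkeeping in Property~2: noncommutativity forces a careful tracking of which conjugation is acting at each step ($\bbCj$ versus full $\bbH$), and one must verify that it is precisely the $\Expe{|\dX|^2} + \Expe{\polarmodj{\dX}}\bmj$ combination---and not, say, $\Expe{\dX^2}$ or a mixed variant---that symmetrizes the four real bilinear forms into a single scalar quaternion. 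Once this identification is made, the polarization interpretation developed in the sequel follows directly.
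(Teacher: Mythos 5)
Your proposal follows essentially the same route as the paper's own proof in Appendix~\ref{app:spectralRepProof}: lift the classical Cram\'er representations of $u$ and $v$, use left-$\bbH$-linearity to set $\dX(\nu) = \dU(\nu) + \bmi\,\dV(\nu)$, and derive properties 1)--3) from the (joint) orthogonality and second-order moments of $\dU$ and $\dV$, with the commutation rule $\bmi q = \overline{q}\,\bmi$ for $q \in \bbCj$ handling the noncommutative bookkeeping. Your expansion of $\Expe{\vert\dX\vert^2}$ and $\Expe{\polarmodj{\dX}}$ is correct and in fact slightly more explicit than the paper, which treats property 2) essentially as the definition of $\Gamma_{xx}(\nu)$ and defers the identification with the four covariance components to the Stokes-parameter theorem.
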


	Appendix \ref{app:spectralRepProof} proves this theorem that follows the derivation given by Priestley \cite{priestley1981spectral}, adapted to the QFT setting. Properties 1), 2) and 3) essentially come from the self and joint properties of the spectral increments $\mathrm{d}U(\nu)$ and $\mathrm{d}V(\nu)$. Property 2) introduces the quaternion-valued spectral density of $x$. This is in fact a \emph{power spectral density} since (see Appendix \ref{app:spectralRepProof}) one has
	\begin{equation}\label{eq:powerSpectralDensity}
		\intnu \Gamma_{xx}(\nu)\dnu = \Expe{\vert x[t]\vert^2}+ \Expe{\polarmodj{x[t]}}\bmj \in \bbH
	\end{equation}
	where the right-hand side contains all the power information of the process $x[t]$. 

	Note that four (real) power-related quantities are necessary to describe the second-order properties of a bivariate signal, see Section \ref{ssec:spectralDensityStokes}. This argues in favor of the definition of relevant scalar quaternion-valued quantities such as $\Gamma_{xx}(\nu)$.

	Note also that in our definition $\Gamma_{xx}(\nu)$ is a generalized function: we can use Dirac distributions. Thus $\Gamma_{xx}(\nu)$ does not need to be continuous everywhere.

	Since $x[t] = u[t] + \bmi v[t]\in \bbCi$, the spectral increments additionally satisfy the same $\bmi$-hermitian symmetry as the QFT of deterministic $\bbCi$-valued signals, \ie
	\begin{equation}\label{eq:ihermitianSpectralIncrements}
		\dX(-\nu) = \overline{\dX(\nu)}^{\bmi}.
	\end{equation}
	As a result, the spectral density $\Gamma_{xx}(\nu)$ has symmetry
	\begin{equation}\label{eq:isymmetryGammaxx}
		\Gamma_{xx}(-\nu) = \involi{\Gamma_{xx}(\nu)}.
	\end{equation}
	This result shows again that the study of $\bbCi$-valued (bivariate) signals can be performed using only positive frequencies of its quaternion-valued spectral representation. At each (positive) frequency, a quaternion-valued quantity summarizes both power and polarization properties of $x$. This will be detailed in Sections \ref{ssec:spectralDensityStokes} and \ref{sub:Poincaresphere}.



	The spectral increments $\dX(\nu)$ are quaternion-valued random variables (RV). 
	It is usual to describe the full second-order statistical structure of a quaternion RV $q$ by the four covariances $\Expe{q\involmu{q}}$, $\bmmu = \bmi, \bmj, \bmk$.
	These covariances often obey some symmetries characterized by the notion of {\em properness}. Properness levels of quaternion RV have been investigated by several authors \cite{vakhania1999random,amblard2004properness,via2010properness} and reviewed recently in \cite{lebihanproperness15}.  
	The spectral increments of a $\bbCi$-valued process $x[t]$ satisfy the symmetry (\ref{eq:ihermitianSpectralIncrements}) and thus property 3) of theorem \ref{thm:spectralRep} with $\nu' = -\nu$ yields
	\begin{equation}\label{eq:propernessSpectralIncrements}
		\Expe{\dX(\nu)\involi{\dX(\nu)}}= \Expe{\dX(\nu)\involk{\dX(\nu)}} = 0.
	\end{equation} 
	Eq. (\ref{eq:propernessSpectralIncrements}) shows that the spectral increments $\dX(\nu)$ are $(1, \bmj)$-proper in the classification of \cite{lebihanproperness15}, also denoted as $\bbCj$-properness in \cite{amblard2004properness}.
	


	\subsection{Covariances, Wiener-Khintchine theorem}
	
	Thanks to theorem~\ref{thm:spectralRep}, we are able to describe the spectral content of random bivariate signals by their spectral density. Usually, this spectral density is introduced by using Wiener-Khintchine theorem once the autocorrelation of the process has been defined. This is not the case here due to the non-commutativity of $\bbH$: the notion of (auto-)covariance must be carefully defined if one wants to recover a Wiener-Khintchine theorem for quaternion valued processes. To this aim, a natural approach can be to define the autocovariance of $x$ by inverse QFT of $\Gamma_{xx}(\nu)$:
	\begin{equation}
		\label{gamma_xx_WK}
		\gamma_{xx}[\tau] = \int_{-1/2}^{+1/2} \Gamma_{xx}(\nu)\exp(\bmj2\pi\nu \tau)\mathrm{d}\nu,
	\end{equation}
 	so that $\gamma_{xx}[\tau]$ is explicitly given by (see Appendix \ref{app:spectralRepProof})
 	\begin{equation}\label{eq:gamma_xxAutocorrDef}
 	\begin{split}
 		\gamma_{xx}[\tau] &= R_{uu}[\tau] + R_{vv}[\tau]\\ 
 		& + (R_{uu}[\tau]-R_{vv}[\tau])\bmj + 2R_{uv}[\tau] \bmk.
 	\end{split}
 	\end{equation}
 	The autocovariance function $\gamma_{xx}[\tau]$ takes its values in $\Span{1, \bmj, \bmk}$. It is not symmetric in $\tau$, as the term $R_{uv}[\tau]$ is not symmetric in general. 
%
 	More generally one can define the cross-spectral density between two bivariate stationary random processes $x[t] = u_x[t]+\bmi v_x[t]$ and $y[t] = u_y[t] + \bmi v_y[t]$, where $u_k[t], v_k[t]\in\bbR$, $k=x,y$. Let us denote by $\mathrm{d}X(\nu)$ and $\mathrm{d}Y(\nu)$ their  spectral increments. The \emph{cross-spectral density} between $x$ and $y$ is
 	\begin{equation}\label{eq:CrossSpec}
 		\Gamma_{xy}(\nu)\dnu = \Expe{\dX(\nu)\overline{\mathrm{d}Y(\nu)}} + \Expe{\dX(\nu)\involj{\mathrm{d}Y(\nu)}} \bmj
 	\end{equation}
 	so that their cross-covariance defined as its inverse QFT reads:
	\begin{eqnarray}\label{eq:crosscov}
 		\gamma_{xy}[\tau] = R_{u_xu_y}[\tau] + R_{v_yv_x}[\tau] + (R_{u_yv_x}[\tau]-R_{u_xv_y}[\tau])\bmi \nonumber\\ 
 		 + (R_{u_xu_y}[\tau] - R_{v_yv_x}[\tau])\bmj + (R_{u_yv_x}[\tau]+R_{u_xv_y}[\tau])\bmk.
	\end{eqnarray}
	This quaternion-valued cross-covariance encodes the full statistical information about $x$ and $y$. Eq. (\ref{eq:gamma_xxAutocorrDef}) and (\ref{eq:crosscov}) may sound disappointing at first glance, but there is no simple expression of those equations in terms of usual covariance $\Expe{x[t]\overline{y[t-\tau]}}$ and complementary covariance $\Expe{x[t]y[t-\tau]}$. However the following Wiener-Khintchine like theorem directly connects $x[t]$ and $y[t]$ to the cross-spectral density (\ref{eq:CrossSpec}).


\begin{theorem}
	Let $x$ and $y$ be two stationary random bivariate signals. Then
	\begin{eqnarray}
		\Expe{\dX(\nu)\overline{\mathrm{d}Y(\nu)}} = \displaystyle{\sum_{\tau=-\infty}^{+\infty} \Expe{x[t]e^{-\bmj2\pi\nu\tau}\overline{y[t-\tau]}}} \label{eq:WkTheoremS}\\
		\Expe{\dX(\nu)\involj{\mathrm{d}Y(\nu)}} = \displaystyle{\sum_{\tau=-\infty}^{+\infty} \Expe{x[t]e^{-\bmj2\pi\nu\tau}\involj{y[t-\tau]}} } \label{eq:WkTheoremV}
	\end{eqnarray}
	In the special case $y[t] = x[t]$ one has
	\begin{eqnarray}
		\mathcal{S}(\Gamma_{xx}(\nu)) &= \displaystyle{\sum_{\tau=-\infty}^{+\infty} \Expe{x[t]e^{-\bmj2\pi\nu\tau}\overline{x[t-\tau]}}} \label{eq:WkTheoremS}\\
		\mathcal{V}(\Gamma_{xx}(\nu)) &= \displaystyle{\sum_{\tau=-\infty}^{+\infty} \Expe{x[t]e^{-\bmj2\pi\nu\tau}\involj{x[t-\tau]}} }\bmj \label{eq:WkTheoremV}
	\end{eqnarray}
\end{theorem}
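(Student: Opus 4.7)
The plan is to begin from the right-hand sides of \eqref{eq:WkTheoremS}--\eqref{eq:WkTheoremV}, substitute the Fourier--Stieltjes representations of $x[t]$ and $y[t-\tau]$ supplied by Theorem \ref{thm:spectralRep}, collapse the $\tau$-sum to a Dirac at $\nu_2 = \nu$, and use the joint two-time orthogonality of the spectral increments to collapse the remaining frequency integral. The key enabling observation is that every exponential $\exp(\bmj 2\pi\alpha\,\cdot)$ appearing in the argument takes values in the commutative subfield $\bbCj$, so the exponentials commute freely among themselves; the non-commutative interaction with $\dX(\nu_1)$ and $\mathrm{d}Y(\nu_2)$ is confined to a single ``sandwiched'' exponential that will eventually be evaluated to $1$.

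Concretely, I would first write $x[t] = \intnu \dX(\nu_1)\exp(\bmj 2\pi\nu_1 t)\,\mathrm{d}\nu_1$ and use the anti-multiplicativity of conjugation together with $\overline{\exp(\bmj\theta)} = \exp(-\bmj\theta)$ to get $\overline{y[t-\tau]} = \intnu \exp(-\bmj 2\pi\nu_2(t-\tau))\,\overline{\mathrm{d}Y(\nu_2)}\,\mathrm{d}\nu_2$. After a Fubini exchange (justified by joint harmonizability and absolute summability of the relevant cross-covariance sequences) and collection of the $\bbCj$-valued exponentials into $\exp(\bmj 2\pi(\nu_1-\nu_2)t)\exp(-\bmj 2\pi(\nu-\nu_2)\tau)$, the sum $\sum_{\tau\in\mathbb{Z}}\exp(-\bmj 2\pi(\nu-\nu_2)\tau)$ acts distributionally as a Dirac supported at $\nu_2 = \nu$ on $[-1/2,+1/2]$, collapsing the $\nu_2$-integral and leaving an inner factor $\exp(\bmj 2\pi(\nu_1-\nu)t)$ sandwiched between $\dX(\nu_1)$ and $\overline{\mathrm{d}Y(\nu)}$. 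Joint orthogonality of the spectral increments then gives $\Expe{\dX(\nu_1)\,q\,\overline{\mathrm{d}Y(\nu)}} = 0$ for $\nu_1 \neq \nu$ and any deterministic $q \in \bbH$ (by expanding $q$ componentwise and using the orthogonality of each pair of scalar increments at distinct frequencies), while at $\nu_1 = \nu$ the sandwiched exponential equals $1$; integrating in $\nu_1$ then delivers $\Expe{\dX(\nu)\,\overline{\mathrm{d}Y(\nu)}}$, which proves \eqref{eq:WkTheoremS}. Equation \eqref{eq:WkTheoremV} follows verbatim once one observes that the $\bmj$-involution is anti-multiplicative and satisfies $\involj{\exp(\bmj\theta)} = \exp(-\bmj\theta)$, so that $\involj{y[t-\tau]}$ has the same exponential structure with $\overline{\mathrm{d}Y}$ replaced by $\involj{\mathrm{d}Y}$.

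The specialization to $y = x$ then amounts to projecting onto scalar and vector parts: $\Expe{\dX(\nu)\,\overline{\dX(\nu)}} = \Expe{|\dX(\nu)|^2} \in \bbR$ matches $\mathcal{S}(\Gamma_{xx}(\nu))\,\mathrm{d}\nu$, while $\Expe{\dX(\nu)\involj{\dX(\nu)}}\bmj = \Expe{\polarmodj{\dX(\nu)}}\bmj$ lies in $\Span{\bmi, \bmj, \bmk}$ (a direct Cartesian expansion shows $\polarmodj{q}$ has no $\bmj$-component for any $q \in \bbH$, so right-multiplication by $\bmj$ produces a pure vector quaternion) and matches $\mathcal{V}(\Gamma_{xx}(\nu))\,\mathrm{d}\nu$ through property 2) of Theorem \ref{thm:spectralRep}. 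I expect the main obstacle to be the careful bookkeeping of non-commutativity around the sandwiched exponential, together with the rigorous justification of the Fubini exchange and of the Dirac-comb identity in the mean-square sense; both are variants of technicalities already dispatched in Appendix \ref{app:spectralRepProof}, so the present argument should largely reuse that machinery.
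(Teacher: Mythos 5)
Your proposal is correct in substance but takes a genuinely different route from the paper's. The paper's (sketched) proof is componentwise: it expands $\dX(\nu), \mathrm{d}Y(\nu)$ as $\mathrm{d}U_k(\nu) + \bmi\,\mathrm{d}V_k(\nu)$ and $x[t], y[t]$ as $u_k[t] + \bmi v_k[t]$, moves $\bmi$ across $\bbCj$-valued factors via $\bmi q = \overline{q}\bmi$, and reduces each resulting scalar identity to the classical Wiener--Khintchine theorem for jointly stationary real sequences. You instead run the Priestley-style derivation directly in the quaternion spectral domain: substitute the Fourier--Stieltjes representations, collapse $\sum_{\tau}e^{-\bmj 2\pi(\nu-\nu_2)\tau}$ distributionally to a Dirac at $\nu_2 = \nu$, and annihilate the off-diagonal $\nu_1 \neq \nu$ contribution by orthogonality of increments. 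Your route is self-contained and reuses the Appendix machinery, at the price of handling the Dirac comb and the noncommutative sandwiched exponential; the paper's route outsources exactly those technicalities to the known real-valued theorem. Both are legitimate, and your handling of the $\involj{\cdot}$ case (anti-multiplicativity, $\involj{e^{\bmj\theta}} = e^{-\bmj\theta}$) and of the specialization $y = x$ (including the observation that $\polarmodj{q}$ has no $\bmj$-component, so $\polarmodj{q}\,\bmj$ is pure) is accurate.

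One auxiliary claim is false as stated and must be weakened: it is \emph{not} true that $\Expe{\dX(\nu_1)\,q\,\overline{\mathrm{d}Y(\nu)}} = 0$ for $\nu_1 \neq \nu$ and \emph{any} deterministic $q \in \bbH$. Componentwise expansion with, say, $q = \bmi$ produces unpaired terms such as $\Expe{\overline{\mathrm{d}U_x(\nu_1)}\,\overline{\mathrm{d}U_y(\nu)}}$, and by the Hermitian symmetry $\mathrm{d}U(-\nu) = \overline{\mathrm{d}U(\nu)}$ of real-process increments these vanish only for $\nu_1 \neq -\nu$, not for $\nu_1 \neq \nu$. Concretely, for the degenerate bivariate signal $y = x = u$ with $u$ a real white noise, one gets $\Expe{\dX(-\nu)\,\bmi\,\overline{\dX(\nu)}} = \bmi\,\Expe{\vert\mathrm{d}U(\nu)\vert^2} \neq 0$, so the lemma fails at $\nu_1 = -\nu$ for general $q$. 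Your proof survives because the only $q$ you ever sandwich is $e^{\bmj 2\pi(\nu_1 - \nu)t} \in \bbCj$: for $q \in \bbCj$, using $\bmi c = \overline{c}\bmi$ every term in the expansion of $\Expe{\dX(\nu_1)\,q\,\overline{\mathrm{d}Y(\nu)}}$ and of $\Expe{\dX(\nu_1)\,q\,\involj{\mathrm{d}Y(\nu)}}$ reduces to a conjugate-paired cross-moment $\Expe{A(\nu_1)\overline{B(\nu)}}$ with $A \in \{\mathrm{d}U_x, \mathrm{d}V_x\}$ and $B \in \{\mathrm{d}U_y, \mathrm{d}V_y\}$, which joint second-order stationarity of $(u_x, v_x, u_y, v_y)$ does annihilate off the diagonal. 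So restate and prove the lemma for $q \in \bbCj$ only; with that restriction the rest of your argument goes through.
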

	\begin{proof}[Sketch of proof]
		We start by developing both sides using expressions of $\dX(\nu)$ et $\mathrm{d}Y(\nu)$ in terms of $\mathrm{d}U_k(\nu), \mathrm{d}V_k(\nu)$, $k=x,y$ and $x[t], y[t]$ in terms of $u_k[t], v_k[t]$, $k=x,y$. Then usual rules of quaternion calculus (\emph{e.g.} $\overline{q_1q_2} = \overline{q_2}\:\overline{q_1}$ for $q_1, q_2 \in \bbH$; $\bmi q = \overline{q}\bmi$ if $q \in \bbCj$) permit to simplify both sides. Standard Wiener-Khintchine theorems for real signals lead to the result.
	\end{proof}
%
	Let us note finally the following property.
	\begin{proposition}[Autocorrelation of a sum of independent signals]\label{prop:autocorrIndependent}
		If $x$ and $w$ are independent, $\bbCi$-valued, stationary processes then 
		\begin{equation}
			\gamma_{x+y,x+y}[\tau] = \gamma_{xx}[\tau] + \gamma_{yy}[\tau] 
		\end{equation}
	\end{proposition}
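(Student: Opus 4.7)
The plan is to reduce the quaternionic identity to the real scalar case by unpacking the explicit formula \eqref{eq:gamma_xxAutocorrDef} for the autocovariance. Write $x[t] = u_x[t] + \bmi v_x[t]$ and $y[t] = u_y[t] + \bmi v_y[t]$, so that the sum is the bivariate signal $(x+y)[t] = U[t] + \bmi V[t]$ with real components $U = u_x + u_y$ and $V = v_x + v_y$. Applying \eqref{eq:gamma_xxAutocorrDef} to $x+y$ gives
\begin{equation*}
\gamma_{x+y,x+y}[\tau] = R_{UU}[\tau] + R_{VV}[\tau] + (R_{UU}[\tau]-R_{VV}[\tau])\bmj + 2R_{UV}[\tau]\bmk,
\end{equation*}
so the whole claim reduces to showing $R_{UU} = R_{u_xu_x}+R_{u_yu_y}$, $R_{VV} = R_{v_xv_x}+R_{v_yv_y}$ and $R_{UV} = R_{u_xv_x}+R_{u_yv_y}$.

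Next I expand each real covariance. For instance,
\begin{equation*}
R_{UU}[\tau] = R_{u_xu_x}[\tau] + R_{u_yu_y}[\tau] + R_{u_xu_y}[\tau] + R_{u_yu_x}[\tau],
\end{equation*}
and the two cross terms vanish because $x$ and $y$ are independent and zero-mean (the paper assumed $m=0$), giving $\Expe{u_x[t]u_y[t-\tau]} = \Expe{u_x[t]}\Expe{u_y[t-\tau]} = 0$, and similarly with the indices swapped. The same reasoning handles $R_{VV}$ and $R_{UV}$; in the latter, each mixed term such as $\Expe{u_x[t]v_y[t-\tau]}$ factorizes and vanishes by independence.

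Substituting these three identities back into the displayed expression for $\gamma_{x+y,x+y}[\tau]$ yields a sum that regroups exactly into $\gamma_{xx}[\tau] + \gamma_{yy}[\tau]$ via \eqref{eq:gamma_xxAutocorrDef} applied separately to $x$ and to $y$. There is no real obstacle here: independence plus zero-mean collapses all cross real-covariance terms, and the quaternionic structure is carried along unchanged because \eqref{eq:gamma_xxAutocorrDef} is $\bbR$-linear in $R_{uu}, R_{vv}, R_{uv}$. The only thing to be careful about is the implicit typo (the hypothesis should read ``$x$ and $y$ are independent''), and the fact that independence is being used only through its consequence of zero cross-correlation between the real parts of $x$ and those of $y$; one could equivalently phrase the hypothesis as uncorrelatedness of all real components across the two signals.
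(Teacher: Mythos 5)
Your proof is correct and is essentially the paper's own argument made explicit: the paper's proof is simply ``by direct calculation,'' and your calculation---expanding the real covariances of the components of $x+y$, killing the cross terms via independence together with the zero-mean assumption, and regrouping through the $\bbR$-linearity of (\ref{eq:gamma_xxAutocorrDef})---is exactly that calculation. Your side remarks are also accurate: the hypothesis contains a typo ($w$ should read $y$), and independence is only used through the vanishing of the cross-covariances between the real components of the two signals, so uncorrelatedness would suffice.
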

	\begin{proof}
		By direct calculation.
	\end{proof}

	Proposition 1 is a desirable result, which permits to manipulate quaternion valued autocovariance functions like standard autocovariance functions. Note that this result applies with spectral densities of independent signals $x$ and $y$ as well: the spectral density of $x+y$ is the sum of their spectral densities. 



	\subsection{Spectral density and Stokes parameters}\label{ssec:spectralDensityStokes}
	The spectral density $\Gamma_{xx}(\nu)$ is directly related to Stokes parameters, which are fundamental quantities used in polarization of electromagnetic waves \cite{Stokes1852, born2000principles}. Stokes parameters are given by \cite{schreier2010statistical,born2000principles}
	\begin{align}
		S_0(\nu) &= P_{uu}(\nu) + P_{vv}(\nu), \label{eq:S0def}\\
		S_1(\nu) &= P_{uu}(\nu) - P_{vv}(\nu),\\
		S_2(\nu) &= 2\mathfrak{R}\left\lbrace P_{uv}(\nu)\right\rbrace,\\
		S_3(\nu) &= 2\mathfrak{I}_{\bmj}\left\lbrace P_{uv}(\nu)\right\rbrace \label{eq:S3def},
	\end{align}
	where we have introduced the usual spectral densities of $u$ and $v$, $P_{uu}$ and $P_{vv}$, as well as the usual cross-spectral density $P_{uv}$.
	\begin{theorem}
		Let $\Gamma_{xx}(\nu)$ be defined by theorem \ref{thm:spectralRep}. It can be re-expressed like
		\begin{equation}\label{eq:spectralDensStokes}
			\Gamma_{xx}(\nu) = S_0(\nu) + \bmi S_3(\nu) + \bmj S_1(\nu)  + \bmk S_2(\nu).
		\end{equation}
		where $S_\alpha(\nu)$, $\alpha = 0, 1, 2, 3$ are the \emph{Stokes parameters} of $x$.
	\end{theorem}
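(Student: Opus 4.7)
The plan is to compute the two terms on the right-hand side of property~2 in Theorem~\ref{thm:spectralRep}, namely $\Expe{\vert\dX(\nu)\vert^2}$ and $\Expe{\polarmodj{\dX(\nu)}}\bmj$, by expanding the spectral increments via their symplectic decomposition $\dX(\nu) = \dU(\nu) + \bmi\dV(\nu)$ with $\dU,\dV \in \bbCj$. Since the QFT restricted to $\bbCj$-valued sequences is isomorphic to the ordinary Fourier transform (as noted before Theorem~\ref{thm:spectralRep}), $\dU$ and $\dV$ play the role of the standard spectral increments of the real processes $u$ and $v$, and the (real) Wiener--Khintchine theorem identifies their second moments with $P_{uu}\dnu$, $P_{vv}\dnu$ and their cross-moment with $\overline{P_{uv}}\dnu$.

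First I would handle the scalar part. Using $\overline{\bmi\dV} = -\bmi\dV$ (valid since $\bmi\dV$ is pure when $\dV\in\bbCj$), one gets $\overline{\dX} = \overline{\dU} - \bmi\dV$, and expanding $\dX\overline{\dX}$ with the anticommutation identity $q\bmi = \bmi\overline{q}$ (valid for all $q\in\bbCj$) kills the cross-terms, leaving $\vert\dX\vert^2 = \vert\dU\vert^2 + \vert\dV\vert^2$. Taking expectations gives $\Expe{\vert\dX\vert^2} = (P_{uu}+P_{vv})\dnu = S_0(\nu)\dnu$ by~(\ref{eq:S0def}).

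For the $\bmj$-part, I would use $\involj{q}=\overline{q}$ for $q\in\bbCj$ together with $\involj{(\bmi q)} = \bmi q$ to obtain $\involj{\dX} = \overline{\dU} + \bmi\dV$. Expanding $\dX\involj{\dX}$ with the same identities and the commutativity of $\bbCj$ yields $\polarmodj{\dX} = \vert\dU\vert^2 - \vert\dV\vert^2 + 2\bmi\,\overline{\dU}\dV$. After taking expectations and right-multiplying by $\bmj$, the first two terms contribute $(P_{uu}-P_{vv})\bmj\dnu = S_1(\nu)\bmj\dnu$; writing $P_{uv} = \mathfrak{R}(P_{uv}) + \bmj\,\mathfrak{I}_{\bmj}(P_{uv})$ and reducing $2\bmi\,\overline{P_{uv}}\bmj$ via $\bmj^2=-1$ and $\bmi\bmj=\bmk$ turns the cross term into $(S_3(\nu)\bmi + S_2(\nu)\bmk)\dnu$. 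Summing the two contributions gives exactly~(\ref{eq:spectralDensStokes}).

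The main obstacle is nothing conceptual: it is purely the quaternion bookkeeping. Once one has the identities $q\bmi = \bmi\overline{q}$ and $\involj{q}=\overline{q}$ on $\bbCj$, plus the symplectic splitting of $\dX$, everything reduces to a mechanical identification of real-valued (cross-)spectra with the Stokes parameters $S_0,\dots,S_3$ defined in (\ref{eq:S0def})--(\ref{eq:S3def}).
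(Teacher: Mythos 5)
Your proposal is correct and follows essentially the same route as the paper: expand $\dX(\nu) = \dU(\nu) + \bmi\dV(\nu)$, compute $\Expe{\vert\dX(\nu)\vert^2}$ and $\Expe{\polarmodj{\dX(\nu)}}\bmj$, identify the second moments of $\dU,\dV$ with $P_{uu}$, $P_{vv}$, $P_{uv}$, and read off the Stokes parameters via (\ref{eq:S0def})--(\ref{eq:S3def}). Your quaternion bookkeeping (the identities $q\bmi = \bmi\overline{q}$ and $\involj{q}=\overline{q}$ on $\bbCj$, and the reduction of the cross term $2\bmi\,\overline{P_{uv}}\bmj$ to $S_3\bmi + S_2\bmk$) is exactly what the paper's proof leaves implicit, and it checks out.
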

	\begin{proof}
	The two terms appearing in theorem \ref{thm:spectralRep} can be expressed in terms of the spectral increments $\mathrm{d}U(\nu)$ and $\mathrm{d}V(\nu)$:
	\begin{equation}
		\begin{split}
			\Expe{\vert\dX(\nu)\vert^2} & = \Expe{\vert\dU(\nu)\vert^2} + \Expe{\vert\dV(\nu)\vert^2} \\
			&= ( P_{uu}(\nu) + P_{vv}(\nu))\dnu,
		\end{split}
	\end{equation}
	and
	\begin{equation}
		\begin{split}
			\Expe{\polarmodj{\dX(\nu)}} & = \Expe{\vert\dU(\nu)\vert^2} - \Expe{\vert\dV(\nu)\vert^2} \\
			&+ 2\Expe{\dU(\nu)\overline{\dV(\nu)}}\bmi\\
			& = (P_{uu}(\nu) - P_{vv}(\nu) + 2P_{uv}(\nu)\bmi)\dnu.
		\end{split}
	\end{equation}
	Then (\ref{eq:spectralDensStokes}) follows from (\ref{eq:S0def}) -- (\ref{eq:S3def}). 
	\end{proof}

	Eq. (\ref{eq:spectralDensStokes}) has a powerful geometric interpretation. Stokes parameters permit to separate the contribution of \emph{polarized} and \emph{unpolarized} components, as discussed next in Section \ref{sec:decompositionPolarUnpolar}. The scalar part $S_0(\nu)$ of $\Gamma_{xx}(\nu)$ is the total power at frequency $\nu$, \ie the sum of the power of the polarized and unpolarized parts. The vector part describes only the polarized part of $x$ at frequency $\nu$.{}

	\subsection{Poincar\'e sphere and degree of polarization}\label{sub:Poincaresphere}

	Fig. \ref{fig:PoincareSphere} depicts the Poincar\'e sphere of polarization states \cite[p. 125]{brosseau1998fundamentals}\cite{Poincare1892a,born2000principles}. At frequency $\nu$, the vector part of $\Gamma_{xx}(\nu)$ -- normalized by its scalar part $S_0(\nu)$ -- identifies a point on this Poincar\'e sphere. The angular coordinates $(2\theta, 2\chi)$ are directly related to the \emph{mean ellipse} properties of the signal, \ie $\theta$ is the mean orientation and $\chi$ is the mean ellipticity. 


	At each frequency, the radius of the Poincar\'e sphere is called the \emph{degree of polarization} $\Phi(\nu)$. Namely, 
	\begin{equation}\label{eq:degreePolarDef}
		\Phi(\nu) = \frac{\sqrt{S_1^2(\nu) + S_2^2(\nu) + S_3^2(\nu)}}{S_0(\nu)} = \frac{\vert \mathcal{V}(\Gamma_{xx}(\nu))\vert}{\mathcal{S}(\Gamma_{xx}(\nu))},
	\end{equation}
	where $\mathcal{V}(\cdot)$ and $\mathcal{S}(\cdot)$ denote the vector and scalar part, respectively. It follows from the definition of $\Phi(\nu)$ that $0 \leq \Phi(\nu) \leq 1$ for all $\nu$. The degree of polarization $\Phi(\nu)$ quantifies the repartition between polarized and unpolarized components. This motivates the following vocabulary: the process $x$ is said to be
	\begin{itemize}
		\item  fully polarized at frequency $\nu$ if $\Phi(\nu) = 1$, 
		\item  unpolarized at frequency $\nu$ if $\Phi(\nu) = 0$,
		\item  partially polarized at frequency $\nu$ if $0 < \Phi(\nu) < 1$.
	\end{itemize}
	The degree of polarization is a quantity of fundamental interest in many fields (see \emph{e.g.} \cite{Kikuchi2001,Shirvany2012}).It is invariant by change of reference frame, making it a robust parameter of interest.
	\begin{figure}
		\centering
		\includegraphics[width=.45\textwidth]{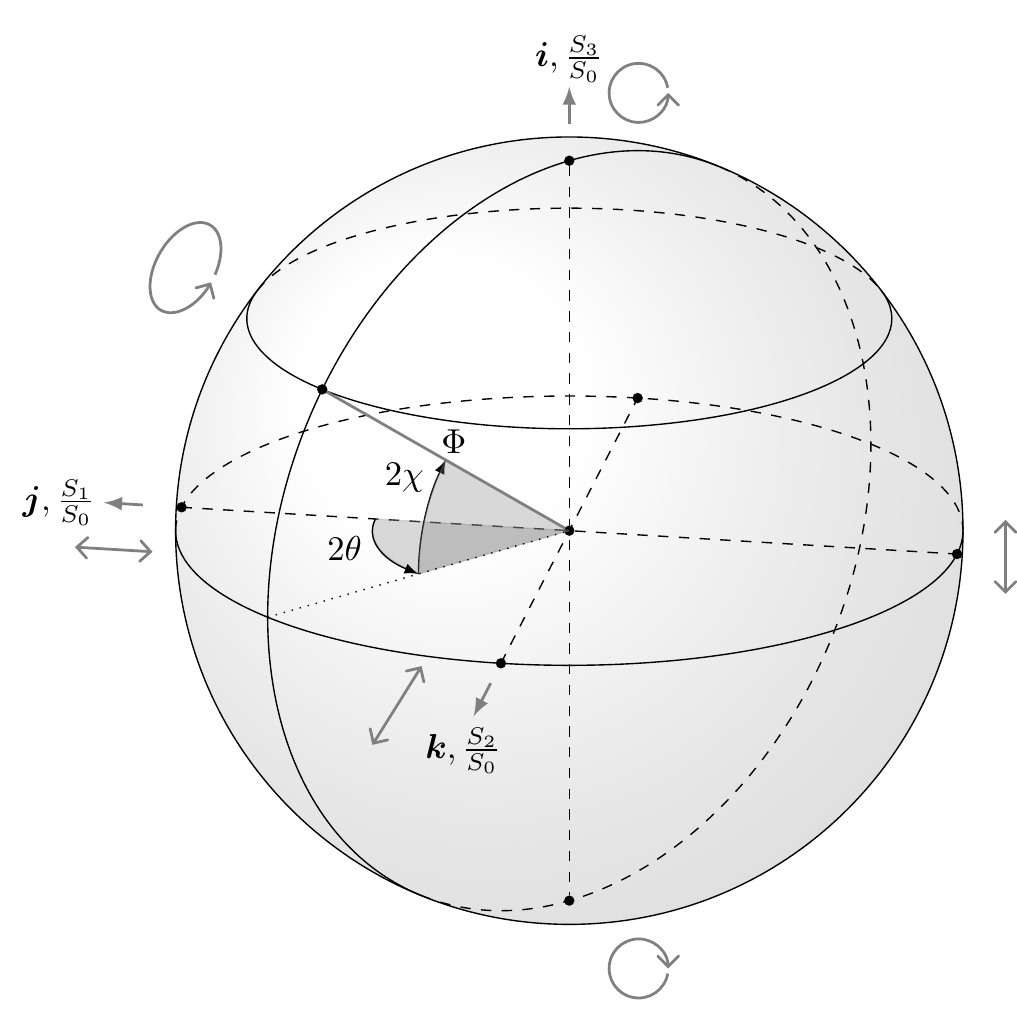}
		\caption{Poincar\'e sphere representation of polarization states. Frequency dependence has been dropped for clarity. A point on the sphere corresponds to a particular polarization state, given by its spherical coordinates $(\Phi, 2\theta, 2\chi)$. }\label{fig:PoincareSphere}
	\end{figure}

	\subsection{Decomposition into polarized and unpolarized parts} \label{sec:decompositionPolarUnpolar} 

	The Unpolarized/Polarized part decomposition (hereafter termed UP decomposition) has been evoked in Section \ref{ssec:spectralDensityStokes}. Now we formally show that any bivariate stationary signal can be uniquely decomposed, as the sum of two uncorrelated processes: a completely polarized process and an unpolarized process. Let us rewrite the spectral density of theorem \ref{thm:spectralRep} as
	\begin{align}
		\Gamma_{xx}(\nu)\dnu &= \left[1-\Phi(\nu)\right]\Expe{\vert\dX(\nu)\vert^2} \nonumber\\
		&+ \left[\Phi(\nu)\Expe{\vert\dX(\nu)\vert^2} + \Expe{\polarmodj{\dX(\nu)}} \bmj\right]\nonumber\\
		&=\Gamma_{xx}^{\tt u}(\nu)\dnu + \Gamma_{xx}^{\tt p}(\nu)\dnu,\label{eq:decompoUnpolarPolar}
	\end{align}
	where the ${\tt u}$ and ${\tt p}$ superscripts stand for unpolarized and polarized part, respectively. The decomposition (\ref{eq:decompoUnpolarPolar}) is unique. Using Stokes parameters, we get
	\begin{equation}
	\begin{split}
		\Gamma_{xx}(\nu) &= \left[1-\Phi(\nu)\right]S_0(\nu) \\
		&+ \left[\Phi(\nu)S_0(\nu) + \bmi S_3(\nu) + \bmj S_1(\nu)  + \bmk S_2(\nu)\right]
	\end{split}
	\end{equation}
	which is the quaternion counterpart of the decomposition given in standard optics textbooks, see \emph{e.g.} \cite[p. 127]{brosseau1998fundamentals}, \cite[p. 551]{born2000principles}. Eq. (37) highlights how the degree of polarization rules the power repartition between the polarized and unpolarized parts of the spectral density. 
	
	 Back to the time-domain, there exists two uncorrelated processes $x_{\tt p}$ and $x_{\tt u}$ such that $x[t] = x_{\tt p}[t] + x_{\tt u}[t]$ and
	\begin{equation}
		\dX(\nu) = \dX_{\tt p}(\nu) + \dX_{\tt u}(\nu),
	\end{equation}
	with $\dX_{\tt p}$ and $\dX_{\tt u}$ the spectral increments corresponding to the polarized and unpolarized part, respectively.

	\subsection{Comparison with previous work}
	\label{sub:comparisonRotarySpectrum}

	\subsubsection{Proper and improper signals}
	The notion of (im)properness of a complex signal has attracted much interest in the signal processing community over the last two decades, see \cite{Picinbono1997b,Adali2011,schreier2010statistical} and references therein. To account for the full second-order statistical structure of a stationary complex signal, one has to consider both the usual autocovariance $R_{xx}[\tau]$ and the complementary covariance $\tilde{R}_{xx}[\tau]$ such that: 
	\begin{align}
	R_{xx}[\tau] &= \Expe{x[t]\overline{x[t-\tau]}}\\
	\tilde{R}_{xx}[\tau] &= \Expe{x[t]x[t-\tau]}
	\end{align}
	Proper signals are characterized by a zero complementary covariance sequence, meaning that a signal $x[t]$ is not correlated with its complex conjugate $\overline{x[t-\tau]}$, for all $\tau$. It follows that
	\begin{equation}\label{eq:propernessConditions}
		\forall \tau,\: R_{uu}[\tau] = R_{vv}[\tau] \text{ and } R_{uv}[-\tau] + R_{uv}[\tau] = 0.
	\end{equation}
	A direct consequence is that the spectral density (\ref{eq:spectralDensStokes}) of a proper signal $x[t]$ reads
	\begin{equation}\label{eq:PSDProperSignal}
		\Gamma_{xx}(\nu) = S_0(\nu) + \bmi S_3(\nu)
	\end{equation}
	as conditions (\ref{eq:propernessConditions}) are equivalent to $S_1(\nu) = S_2(\nu) = 0$ for all $\nu$. Eq. (\ref{eq:PSDProperSignal}) shows that a proper signal is in general partially circularly polarized. This highlights that \emph{polarization} and \emph{properness} of complex random signals are distinct concepts and therefore shall not be confused. 

	\subsubsection{Relation to the rotary spectrum approach}

	To illustrate the relevance of the quaternion-valued spectral density $\Gamma_{xx}(\nu)$ defined in theorem \ref{thm:spectralRep}, we compare it to the well-known rotary spectrum approach \cite{gonella1972rotary,mooers1973technique,walden2013rotary}. This later method decomposes a bivariate signal into a sum of clockwise (CW) and counterclockwise phasors (CCW) -- the so-called rotary components \cite{walden2013rotary}. The determination of the rotary components relies on the usual spectral density $P_{xx}(\nu)$ and complementary spectral density $\tilde{P}_{xx}(\nu)$ defined by standard FTs of $R_{xx}[\tau]$ and $\tilde{R}_{xx}[\tau]$, respectively \cite{schreier2010statistical}.


	For $\nu > 0$, the CW rotary power spectrum is given by $P_{xx}(\nu)$, while $P_{xx}(-\nu)$ gives the CCW rotary power spectrum. The rotary coherence (\ie correlation between CW and CCW components) is controled by $\tilde{P}_{xx}(\nu)$, which is in general complex-valued. 

	The rotary spectra can be expressed in terms of Stokes parameters like \cite[p. 213]{schreier2010statistical}
	\begin{equation}
		P_{xx}(\nu) = S_0(\nu) + S_3(\nu), \quad \tilde{P}_{xx}(\nu) = S_1(\nu) + \bmi S_2(\nu). 
	\end{equation}
	Since $S_0(\nu)$ is even and $S_3(\nu)$ is odd, $P_{xx}(\nu)$ shows no particular symmetry. Moreover, we see that $P_{xx}(\nu)$ combines in one real scalar two very different quantities: $S_0(\nu)$ is related to the total power, and $S_3(\nu)$ gives the (signed) power of the circularly polarized part. This is not surprising since the pair $(P_{xx}(\nu), \tilde{P}_{xx}(\nu))$ was introduced to account for improperness properties of complex-valued signals, not polarization properties. In contrast the use of $\Gamma_{xx}(\nu)$ provides directly the total power information decoupled from the polarization information. 

	The rotary spectrum and the quaternion-valued approach provide equivalent representations. However, the quaternion-valued spectral density $\Gamma_{xx}$ provides a direct interpretation of physical quantities, the Stokes parameters. These parameters appear naturally in the components of $\Gamma_{xx}$. The use of the QFT to study bivariate signals  provides a well-suited framework for a meaningful and rooted in physics ``geometric spectral analysis''. This approach demonstrates that the use of higher dimensional algebra in the definition of the FT permits to avoid \emph{ad hoc} constructions while revealing intrinsic relevant parameters of bivariate signals.

	\subsection{Examples} 
	\label{sub:examplesSpecRep}

	\subsubsection{Deterministic signals}
	In this case the spectral density reads
	\begin{equation}
		\Gamma_{xx}(\nu) = \vert X(\nu)\vert^2 + \polarmodj{X(\nu)}\,\bmj,
	\end{equation}
	where expectations have been dropped out. Moreover one gets
	\begin{equation}
		\Phi(\nu) = \frac{\vert \polarmodj{X(\nu)} \vert}{\vert X(\nu)\vert^2} = \frac{\vert X(\nu)\vert \vert \involj{X(\nu)}\vert}{\vert X(\nu)\vert^2} = 1,
	\end{equation}
	so that \emph{a deterministic signal is always fully polarized at all frequencies for which $X(\nu) \neq 0$.} The other polarization parameters $a, \theta$, $\chi$ or $S_1, S_2, S_3$ can be determined as well, depending on the signal. Let the bivariate monochromatic signal $x$ such that
	\begin{equation}\label{eq:bivariateMonochromaticSignal}
		x[t] = 2ae^{\bmi\theta}(\cos\chi\cos[2\pi\nu_0t] +\bmi\sin\chi\sin[2\pi\nu_0t]),
	\end{equation}
	with $a, \chi, \theta$ the parameters of the elliptic polarization. The QFT of $x$ reads\footnote{For the sake of simplicity, we restrict our analysis to the frequency support $[-1/2, 1/2]$, although $X(\nu)$ is periodic. Note also the use of the Euler polar form (\ref{eq:polarForm}) with a null $\bmj$-phase.}
	\begin{equation}
		X(\nu) = ae^{\bmi\theta}e^{\bmk\chi}\delta(\nu+\nu_0) + ae^{\bmi\theta}e^{-\bmk\chi}\delta(\nu-\nu_0),
	\end{equation}
	which gives the following spectral density $\Gamma_{xx}(\nu)$:
	\begin{equation}
		\Gamma_{xx}(\nu) = \Gamma_{xx}^{\nu_0}\delta(\nu-\nu_0) + \involi{{\Gamma_{xx}^{\nu_0}}}\delta(\nu+\nu_0),
	\end{equation}
	where $\Gamma_{xx}^{\nu_0} = S_0 + \bmi S_3 + \bmj S_1 + \bmk S_2$.
	The autocovariance sequence is then obtained by inverse QFT:
	\begin{equation}
	\begin{split}
		\gamma_{xx}[\tau] &= 2 S_0\cos[2\pi\nu_0\tau]+ \bmj 2S_1 \cos[2\pi\nu_0\tau]\\
		&+2\bmk\left(S_2\cos[2\pi\nu_0\tau] + S_3\sin[2\pi\nu_0\tau]\right).
	\end{split}
	\end{equation}
	The autocovariance is not symmetric. The value of $S_3$ controls the odd contribution, whereas the remaining terms are all even. It is interesting to note that the autocovariance function of a monochromatic signal is even if and only if the signal is linearly polarized. 


	\subsubsection{Bivariate white noise}\label{sssub:improperWN}
	Consider the process $w[t] = u[t] + \bmi v[t]$ where $u, v$ are both real, i.i.d. and jointly second-order stationary with properties:
	\begin{equation}\begin{split}
		&\Expe{u} = \Expe{v} = 0, \\
		 &\Expe{u^2}= \sigma_u^2, \: \Expe{v^2} = \sigma_v^2, \: \Expe{uv} = \rho_{uv}\sigma_u\sigma_v.	
	\end{split}
	\end{equation}
	Since $u, v$ are i.i.d., their autocovariances are
	\begin{equation}
		R_{uu}[\tau] = \sigma_u^2\delta_{\tau, 0}, \: R_{vv}[\tau] = \sigma_v^2\delta_{\tau, 0}, \: R_{uv}[\tau] = \rho_{uv}\sigma_u\sigma_v\delta_{\tau, 0}.
	\end{equation}
	which yields the autocorrelation of $x$ using (\ref{eq:gamma_xxAutocorrDef})
	\begin{equation}
		\gamma_{ww}[\tau] = \left[\sigma_u^2+\sigma_v^2 + \bmj (\sigma_u^2-\sigma_v^2) + 2\bmk \rho_{uv}\sigma_u\sigma_v\right]\delta_{\tau, 0}.
	\end{equation}
	The spectral density is obtained by QFT:
	\begin{equation}\label{eq:spectralDensWhiteNoise}
		\Gamma_{ww}(\nu) = \sigma_u^2+\sigma_v^2 + \bmj (\sigma_u^2-\sigma_v^2) + 2\bmk \rho_{uv}\sigma_u\sigma_v.
	\end{equation}
	This spectral density is constant. It has no $\bmi$-component, so that $S_3(\nu)=0$ for all $\nu$. As a consequence, an \emph{i.i.d. arbitrary second-order stationary bivariate or complex white noise shows no ellipticity. It is always either unpolarized, or linearly polarized (fully or partially). The polarization properties are identical at all frequencies}.



	The polarization degree defined by (\ref{eq:degreePolarDef}) is:
	\begin{equation}\label{eq:polarizationDegreeImproperWN}
		\Phi = \frac{\sqrt{(\sigma_u^2-\sigma_v^2)^2+ 4\rho_{uv}^2\sigma_u^2\sigma_v^2}}{\sigma_u^2+\sigma_v^2},
	\end{equation}
	where we see that $x[t]$ is unpolarized at all frequencies iff $\sigma_u =\sigma_v$ and $\rho_{uv} = 0$. When $\Phi \neq 0$, the angle $\theta$ of the linear polarization is given by $\theta = 0$ if $\rho_{uv} = 0$ and by 
	\begin{equation}\label{eq:polarizationAngleImproperWN}
	\begin{cases}
		\theta &= \frac{1}{2}\mathrm{atan2}\left[\frac{2\rho_{uv}\sigma_u\sigma_v}{(\sigma_u^2-\sigma_v^2)}\right], \text{ if } \sigma_u \neq \sigma_v\\
		\theta &= \pi/4, \text{ if } \sigma_u = \sigma_v
	\end{cases}
	\end{equation}
	when $\rho_{uv} \neq 0$ and where $\mathrm{atan2}$ denotes the four-quadrant inverse tangent.


	The UP decomposition of bivariate white noise gives a simple procedure to simulate bivariate white noise with desired polarization properties. Let $ 0 \leq \Phi\leq 1$ be the desired degree of polarization, $\theta \in [-\pi/2, \pi/2]$ the orientation angle and $S_0 > 0$ the total intensity. Let $w^{\texttt{u}}[t]$ be an unpolarized white noise, $\bbCi$-valued, such that $R_{ w^{\texttt{u}} w^{\texttt{u}}}[\tau] = \delta_{\tau, 0}$. Let $w^{\texttt{p}}[t]$ be a real-valued white noise sequence of unit variance. Assume further that $w^{\texttt{u}}[t]$ and $w^{\texttt{p}}[t]$ are independent. Then the white noise $w[t]$ constructed as
	\begin{equation}\label{eq:UPdecompWN}
		w[t] = \sqrt{S_0}\left(\sqrt{1-\Phi}w^{\texttt{u}}[t] + \sqrt{\Phi}\exp(\bmi \theta)w^{\texttt{p}}[t]\right)
	\end{equation}
	has exactly the desired polarization properties. 



	\subsubsection{Bivariate monochromatic signal in white noise}\label{ssub:bivMonoImWN}

	Consider the signal $y[t] = x[t] + w[t]$, where $x[t]$ is a bivariate monochromatic signal and $w[t]$ is a bivariate white noise (i.i.d. sequence). Assume moreover that $x[t]$ and $w[t]$ are independent. From Proposition \ref{prop:autocorrIndependent} the spectral density of $y$ is directly given by the sum of their spectral densities:
	\begin{equation}\label{eq:GammayyGammaxxGammaww}
		\Gamma_{yy}(\omega) = \Gamma_{xx}^{\nu_0}\delta(\nu-\nu_0) + \involi{{\Gamma_{xx}^{\nu_0}}}\delta(\nu+\nu_0) + \Gamma_{ww}(\nu)
	\end{equation}
	where $\Gamma_{ww}(\nu)$ is given by (\ref{eq:spectralDensWhiteNoise}) and $\Gamma_{xx}^{\nu_0} = S_{0, x} + \bmi S_{3, x} + \bmj S_{1, x} + \bmk S_{2, x}$;  the $S_{\alpha, x}$ denote the Stokes parameters of $x$. At $\nu = \nu_0$ the spectral density writes $\Gamma_{yy}^{\nu_0} = S_{0, y} + \bmi S_{3, y} + \bmj S_{1, y} + \bmk S_{2, y}$ where
	\begin{equation}\label{eq:explicitGammayyGammaxxGammaww}
		\begin{split}
			S_{0, y} &= S_{0, x} + \sigma_u^2+\sigma_v^2, \quad S_{1, y}  = S_{1, x}  + \sigma_u^2-\sigma_v^2 \\
			S_{2, y}  &= S_{2, x}  + 2\rho_{uv}\sigma_u\sigma_v, \quad S_{3, y} = S_{3, x}.
		\end{split}
	\end{equation}
	First, we see that $S_1$ and $S_2$ parameters are mixing polarization properties of $x$ and $w$. Since $S_3$ is not modified in presence of i.i.d. white noise, only the direction of polarization changes, not the ellipticity. The output degree of polarization (at $\nu = \nu_0$) takes a simple form when the noise is unpolarized:
	\begin{equation}
		\Phi_y = \frac{\sqrt{S_{1, x}^2 + S_{2, x}^2 + S_{3, x}^2}}{S_{0, x} + \sigma_u^2 + \sigma_v^2} = \frac{\mathrm{SNR}}{\mathrm{SNR} + 1} \Phi_x \leq \Phi_x
	\end{equation}
	where $\mathrm{SNR} = S_{0, x}/(\sigma^2_u + \sigma_v^2)$ is the signal-to-noise ratio (SNR). The degree of polarization decreases with the SNR.

\section{Spectral density estimation} 
\label{sec:spectral_density_estimation}
	
	We propose two nonparametric spectral density estimation methods, and derive their properties. In particular, we investigate the problem of estimating the degree of polarization. In the remainder of this paper, we consider a bivariate stationary signal $x[t] = u[t] + \bmi v[t]$ consisting in $N$ samples such that $t = 0, 1, \ldots N-1$ and with sampling size $\Delta = 1$.

	\subsection{A naive spectral estimator: the polarization periodogram} 
	\label{sub:polarization_periodogram}

	The first basic spectral density estimator is the \emph{polarization periodogram} $\hat{\Gamma}_{xx}^{(p)}(\nu)$. The underlying rationale is very close to the derivation of the usual periodogram. One starts by computing an estimator $\hat{\gamma}_{xx}^{(p)}[\tau]$ of the autocovariance sequence $\gamma_{xx}[\tau]$. It is done by combining usual (biased) estimators of auto- and cross-covariance sequences:
	\begin{equation}\label{eq:crossCorrEstimator}
		\hat{R}_{uv}^{(p)}[\tau] = \frac{1}{N}\sum_{t=1}^{N-\tau}u[t+\tau]v[t], \: \tau = 0, 1, \ldots N-1
	\end{equation}
	where $\hat{R}_{uv}^{(p)}[\tau] = \hat{R}_{vu}^{(p)}[-\tau]$ for $\tau = -1, \ldots, -(N-1)$ and $\hat{R}_{uv}^{(p)}[\tau] = 0$ for $\vert \tau\vert \geq N$. Auto-covariance estimators follow from (\ref{eq:crossCorrEstimator}). Then taking the QFT of $\hat{\gamma}_{xx}^{(p)}[\tau]$ given by (\ref{eq:gamma_xxAutocorrDef}) yields the \emph{polarization periodogram}, which reads (after simplification)
	\begin{equation}\label{eq:polarizationPeriodogram}
	\begin{split}
		\hat{\Gamma}_{xx}^{(p)}(\nu) &= N^{-1}\left\vert \sum_{t=1}^N x[t]e^{-\bmj2\pi\nu t}\right\vert^2 \\&+  N^{-1}\polarmodj{\sum_{t=1}^N x[t]e^{-\bmj2\pi\nu t}}\bmj
	\end{split}
	\end{equation}

	Alike the classical periodogram, this estimator is a biased, inconsistent estimator of the spectral density $\Gamma_{xx}(\nu)$. One has
	\begin{equation}
		\Expe{\hat{\Gamma}_{xx}^{(p)}(\nu)}= \intnu \mathcal{F}_N(\nu-\nu')\Gamma_{xx}(\nu')\mathrm{d}\nu'
	\end{equation}
	where $\mathcal{F}_N(\nu) = \sin^2(\pi N\nu)/[N\sin^2(\pi\nu)]$ is the Fej\'er kernel. It follows that the polarization periodogram is only asymptotically unbiased.  Note however that in the case of white noise, as $\Gamma_{ww}(\nu)$ is constant, the polarization periodogram is unbiased for any $N$. This is a bivariate counterpart of a classical result, see \emph{e.g.} \cite[p. 202]{percival1993spectral}.

	Alike in standard spectral analysis \cite{percival1993spectral}, data tapers are to be employed to produce a direct spectral estimator with better bias properties than the naive polarization periodogram.


	\subsection{Multitapering}
	\label{sub:multitaper}
	The multitaper method is a well established technique \cite{percival1993spectral,Thomson1982,Walden2000} which produces a spectral density estimate with reduced variance, while maintaining good bias properties. The basic idea is to compute a series of $K$ \emph{direct estimators} $\hat{\Gamma}_{xx}^{k}(\nu)$, $k = 0, 1, \ldots K-1$ that are approximately uncorrelated \cite{percival1993spectral}. The $k$-th spectral estimator reads
	\begin{equation}
	\begin{split}
		\hat{\Gamma}_{xx}^{k}(\nu) &= \left\vert \sum_{t=1}^N h_k[t]x[t]e^{-\bmj2\pi\nu t}\right\vert^2 \\&+ \polarmodj{\sum_{t=1}^N h_k[t]x[t]e^{-\bmj2\pi\nu t}}\bmj,
	\end{split}
	\end{equation}
	where the $h_k$'s are real-valued sequences of size $N$. They are normalized $(\sum_{t=0}^{N-1} h_k[t]^2 = 1)$ and orthogonal 
	\begin{equation}
		\sum_{t = 0}^{N-1} h_k[t]h_{k'}[t] = \delta_{k, k'}.
	\end{equation}
	Functions satisfying these conditions together with good leakage properties are for instance the Slepian tapers \cite{slepian1978prolate}. Then the multitaper estimate is obtained by averaging:
	\begin{equation}\label{eq:spectralDensityMT}
		\hat{\Gamma}_{xx}^{(\mathrm{mt})}(\nu) = \frac{1}{K}\sum_{k=1}^{K}\hat{\Gamma}_{xx}^{k}(\nu).
	\end{equation}
	In practice the value of $K$ often satisfies $K < 10$ to maintain a good spectral resolution \cite{percival1993spectral}.

	\subsection{Estimation of the degree of polarization} 
	\label{sub:degPolarEstimation}

	\subsubsection{Theoretical properties}
	The estimation of the degree of polarization (\ref{eq:degreePolarDef}) has attracted interest in the signal processing community \cite{Medkour2008,SantalladelRio2006} in relation to many fields \cite{Kikuchi2001,Shirvany2012}.
	A naive estimator (at frequencies where the polarization periodogram is nonzero) based on the polarization periodogram would be trivial since:
	\begin{equation}\label{eq:degreePolarEstimatePeriodogram}
		\hat{\Phi}^{(p)}(\nu) = \frac{\vert \mathcal{V}(\hat{\Gamma}_{xx}^{(p)}(\nu))\vert}{\mathcal{S}(\hat{\Gamma}_{xx}^{(p)}(\nu))} = 1,
	\end{equation}
	which is systematically biased, except for frequencies where $x[t]$ is fully polarized. In a situation where $M$ approximately uncorrelated estimates of the spectral density are available (having multiple realizations of $x$ or using a multitaper estimate (\ref{eq:spectralDensityMT}), in which case $M = K$) one can form a new estimate of the degree of polarization as
	\begin{equation}
		\hat{\Phi}^{M}(\nu) = \frac{\vert \sum_{m=1}^{M}\mathcal{V}(\hat{\Gamma}_{xx}^{m}(\nu))\vert}{\sum_{m=1}^{M}\mathcal{S}(\hat{\Gamma}_{xx}^{m}(\nu))},\label{eq:polarizationDegreeEstimate}
	\end{equation}
	which is a better estimator of $\Phi$ than (\ref{eq:degreePolarEstimatePeriodogram}). Medkour and Walden\footnote{Their approach is based on spectral matrices rather than the quaternion-valued spectral density introduced here. However this does not change the nature of their results, since definitions of the degree of polarization are identical.} \cite{Medkour2008} studied theoretically this estimator in a Gaussian setting and showed that it is unbiased in the limit $M \rightarrow \infty$. 

	\subsubsection{Numerical study}
	We propose to numerically study its performances. To avoid spectral blurring effects, we consider the (Gaussian) white noise case since in that case the polarization periodogram is an unbiased estimator of the spectral density:
	\begin{equation}
		\Expe{\hat{\Gamma}_{ww}^{(p)}(\nu)}= \Gamma_{ww}(\nu).
	\end{equation}
	The UP decomposition (\ref{eq:UPdecompWN}) of bivariate white noise allows to generate white noise with prescribed polarization properties. We fix $S_0 = 1$ and generate $M$ independent bivariate white Gaussian noise sequences, leading to $M$ independent periodogram estimates of the spectral density (\ref{eq:polarizationPeriodogram}). The degree of polarization is then estimated by (\ref{eq:polarizationDegreeEstimate}). The bias is estimated by Monte-Carlo simulation. 

	Fig. \ref{fig:degPolBias} depicts the bias in the estimation of the degree of polarization, for $M = 1, 2, 5, 10, 20, 50$ and $500$. Given $M$, the bias increases as the true degree of polarization goes to 0. The bias decreases with larger values of $M$, and becomes negligible for $M \rightarrow \infty$. Note that for typical values of $M < 10$ admissible in multitaper estimation, the bias remains significant for $\Phi \simeq 0.6$. Our results agree with those of \cite{Medkour2008}.

	\begin{figure}
		\centering
		\includegraphics[width=.4\textwidth]{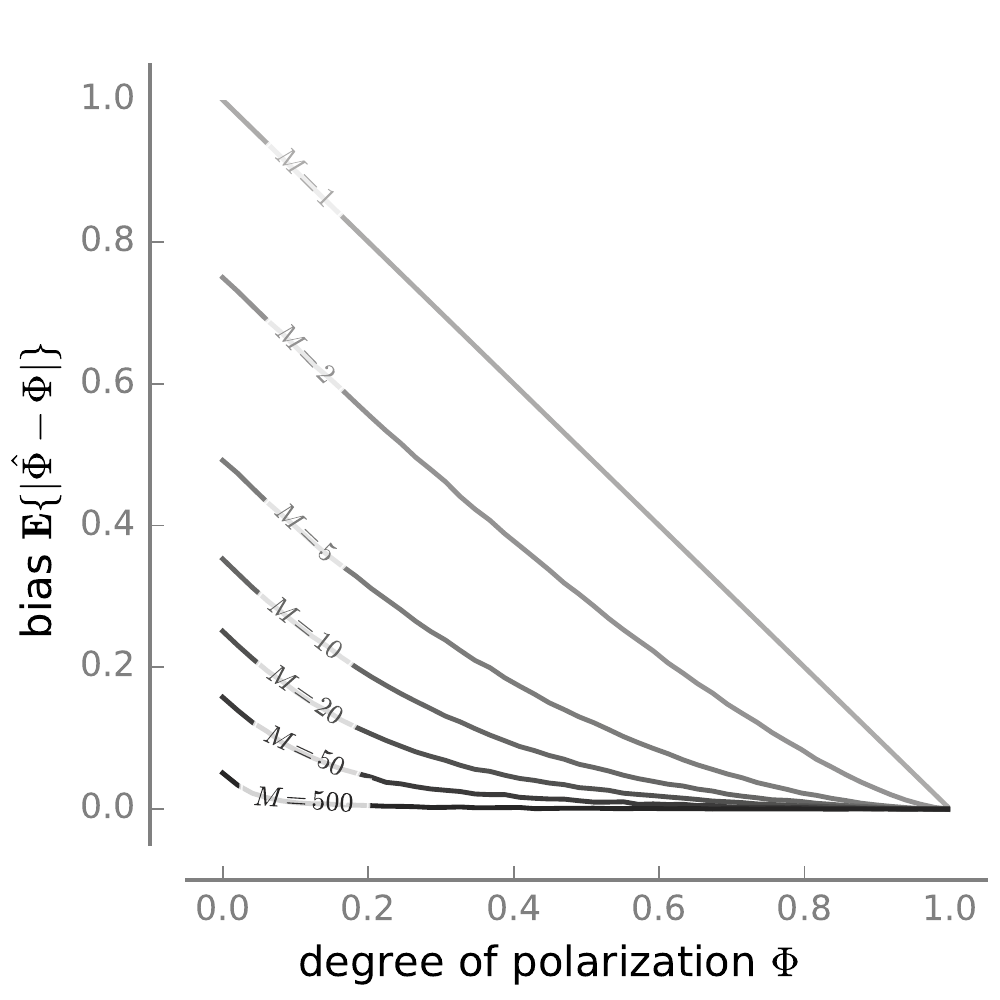}\caption{Estimation bias of the degree of polarization obtained by averaging $M$ independent polarization periodogram estimates. The bias is smaller for large values of $M$ and a degree of polarization close to unity.}
		\label{fig:degPolBias}
	\end{figure}


	

	\subsection{Summary}
	Spectral density estimation of bivariate signals suffers from two biases. As in the univariate case, the naive polarization periodogram $\hat{\Gamma}^{(p)}_{xx}(\nu)$ is a biased estimator of the spectral density since the signal is of finite length $N$. Well-known multitapering techniques can be adapted to handle this bias and reduce variance. 

	The second source of bias relates to the estimation of polarization properties. Precisely, a key quantity is the degree of polarization, as it relates the ratio between polarized and unpolarized parts of the signal. Spectra of polarization attributes are more difficult to obtain than simple power spectra. They require the observation of many realizations to reach a good accuracy.


\section{Numerical validation} 
\label{sec:examples}

We consider the synthetic bivariate signal of Section \ref{ssub:bivMonoImWN} $y[t] = x[t] + w[t]$, where $x$ is a bivariate monochromatic signal defined by (\ref{eq:bivariateMonochromaticSignal}) and where $w$ is a bivariate white Gaussian noise given in (\ref{eq:UPdecompWN}). All signals are of length $N = 1024$. We consider positive frequencies only, as negative frequencies can be obtained by symmetry (\ref{eq:isymmetryGammaxx}). 

The frequency of the monochromatic signal $x$ is set to $\nu_0 = 128/N = 0.125$. The signal $x$ has Stokes parameters at frequency $\nu_0$: $S_{0, x}(\nu_0) = 1$, $s_{1, x}(\nu_0) = S_{1, x}(\nu_0)/S_{0, x}(\nu_0) = -0.354$, $s_{2, x}(\nu_0) = S_{2 x}(\nu_0)/S_{0, x}(\nu_0) =  -0.612$, $s_{3, x}(\nu_0) = S_{3, x}(\nu_0)/S_{0, x}(\nu_0) = 0.707$ and $\Phi_x(\nu_0) = 1$ since $x$ is deterministic. For $\nu \neq \nu_0$, all Stokes parameters $S_{\alpha, x}$, $\alpha = 0, ..., 3$ are zero. Equivalently, $x$ is defined by $a_x = 1$, $\theta_x = -\pi/3$ and $\chi_x = \pi/8$ using the polar form (\ref{eq:bivariateMonochromaticSignal}). 

The white noise signal $w$ has constant-frequency Stokes parameters (see Section \ref{sssub:improperWN}) such that $S_{0, w}(\nu) = 10/N$, $\Phi_w(\nu) = 0.2$, $\theta_w(\nu) = \pi/8$ or equivalently, $s_{1, w}(\nu) = 0.141$ and $s_{2, w}(\nu) = 0.141$. Since $w$ is a white noise sequence (and thus has no memory), $s_{3, w}(\nu) = 0$ for any $\nu$. 

The spectral description of $y[t] = x[t] + w[t]$ was derived explicitely in Section \ref{ssub:bivMonoImWN}. Using expressions (\ref{eq:GammayyGammaxxGammaww}) and (\ref{eq:explicitGammayyGammaxxGammaww}) we see that the resulting Stokes parameters at $\nu \neq \nu_0$ are those of $w$. At frequency $\nu_0$ we have
\begin{equation}
	\begin{split}
		S_{0, y}(\nu_0) &= S_{0, x}(\nu_0) + S_{0, w}(\nu_0), \quad S_{1, y}(\nu_0) = S_{1, x}(\nu_0) + S_{1, w}(\nu_0)\\\
		S_{2, y}(\nu_0) &= S_{2, x}(\nu_0) + S_{2, w}(\nu_0), \quad S_{3, y}(\nu_0) = S_{3, x}(\nu_0).
	\end{split}
\end{equation}
Normalized Stokes parameters and polarization degree then read for frequency $\nu_0$
\begin{equation}
\begin{split}
	s_{1, y}(\nu_0) = -0.349, \:  s_{2, y}(\nu_0) = -0.605, \\
	s_{3, y}(\nu_0) = 0.707, \: \Phi_y(\nu_0) = 0.989.
\end{split}
\end{equation}
Due to the polarization properties of $w$, the polarization properties at $\nu_0$ of $y$ are not the same as those of $x$. 

To investigate the estimation of the spectral density of $y$, we have generated $M = 20$ realizations. For each realization, we compute the polarization periodogram and the multitaper estimate using $K = 5$ Slepian tapers. To reduce the bias of the degree of polarization estimate, those estimates are averaged to produce a polarization periodogram estimate and a multitaper estimate. This also reduces bias on normalized Stokes parameters estimates, as one expects the estimates of $s_1, s_2, s_3$ to be biased as they depend on the value of the degree of polarization. 

Fig. \ref{fig:monoDSP} shows simulation results, and one realization of the process $y$ is shown. As expected the (averaged) multitaper estimate has less variance than the (averaged) polarization periodogram. Thin lines indicate the theoretical values of the components of the spectral density, showing accurate estimation of the spectral quantities.

	\begin{figure*}
		\centering
		\includegraphics[width=\textwidth]{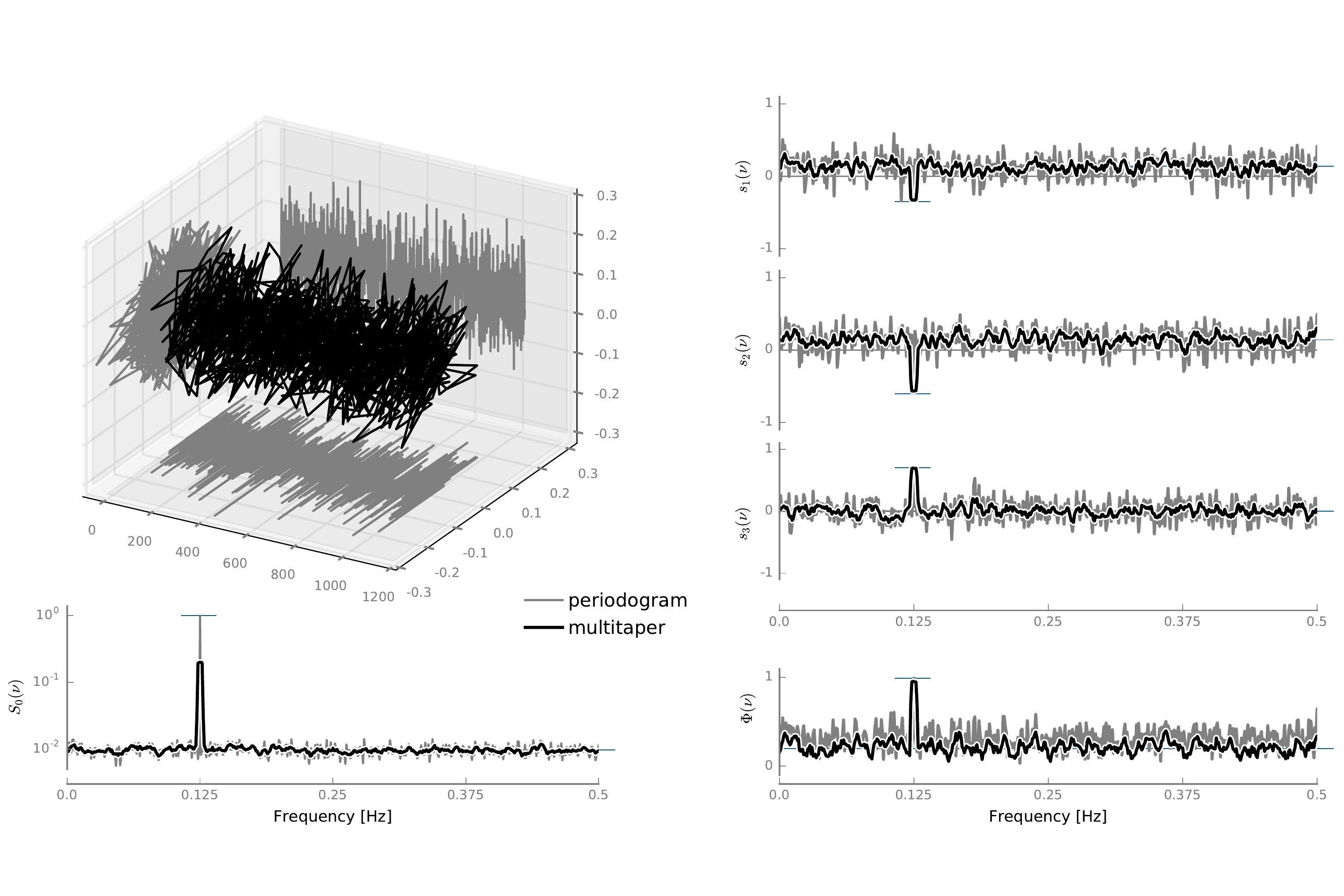}\caption{Spectral density estimation of the $y = x + w$ signal, where $x$ is a monochromatic bivariate signal and $w$ is a bivariate white Gaussian noise. Two estimates are presented, the averaged polarization periodogram and averaged multitaper estimate (computed with $K =5$ Slepian tapers). They are constructed by averaging single estimates obtained via $M = 20$ independent observations of the process $y$. Thin lines indicate the theoretical values of intensity parameter $S_0(\nu)$, normalized Stokes parameters $s_\alpha(\nu) = S_\alpha(\nu)/S_0(\nu)$, $\alpha=1, 2, 3$ and degree of polarization $\Phi(\nu)$. }\label{fig:monoDSP}
	\end{figure*}

\section{Conclusion} 
\label{sec:conclusion}

This paper provides a powerful and relevant framework for an interpretable and efficient spectral analysis of stationary bivariate processes. The richness of the quaternion algebra permits a fruitful interplay between mathematical tools and physical features. Using the QFT, we have introduced the quaternion valued-spectral representation of a bivariate stationary random signal. As a result, the quaternion-valued spectral density is defined, which leads naturally to a spectral analysis in terms of Stokes parameters. It permits a direct interpretation of both power and polarization features of the signal. Simple theoretical examples demonstrate the efficiency of the approach. 
Nonparametric spectral density estimation has been investigated. The limitations of the proposed tools can be studied using standard techniques of univariate spectrum analysis. Moreover, we have stressed the issue raised by the degree of polarization estimation and polarization attributes. These key quantities are relevant to the analysis of bivariate signals but require more care than standard spectral analysis of univariate signals.
Our approach is very generic, and generalizes the standard toolbox of spectral analysis to bivariate stationary signals. It paves the way to new developments in the simulation, estimation and filtering of bivariate signals.

\appendices
\section{Proof of the spectral representation theorem \ref{thm:spectralRep}}\label{app:spectralRepProof}
The proof is divided in two parts, for clarity. 
	\paragraph{Existence}
	Let $x[t] = u[t] + \bmi v[t]$, where $u[t], v[t]$ are real-valued, zero-mean, harmonizable stationary processes. These real processes admit a spectral representation, such that
	\begin{equation}
	\begin{split}
	 	u[t] &= \intnu \mathrm{d}U(\nu)\exp(\bmj2\pi\nu t), \\
		v[t] &= \intnu \mathrm{d}V(\nu)\exp(\bmj2\pi\nu t),
	\end{split}
	\end{equation} 
	since the QFT applied to $\bbCj$-valued signals is equivalent to the usual Fourier transform. By linearity of the QFT, the spectral increments of $x$ are $\dX(\nu) = \mathrm{d}U(\nu) + \bmi \mathrm{d}V(\nu)$, so that
	\begin{equation}
		x[t] = \intnu \mathrm{d}X(\nu)\exp(\bmj2\pi\nu t)
	\end{equation}
	holds for all $t$ in the mean-square sense.

	\paragraph{Properties of the spectral increments}
	The properties of the spectral increments $dX(\nu)$ are a direct consequence of the properties of the spectral increments of $u$ and $v$, respectively. If $x$ is assumed zero-mean stationary,
	\begin{equation}
	\begin{split}
		\forall t, \Expe{x[t]} &= \intnu \Expe{\mathrm{d}X(\nu)}\exp(\bmj2\pi\nu t) = 0\\
		& \Rightarrow \Expe{\mathrm{d}X(\nu)} = \Expe{x[t]} = 0.
	\end{split}
	\end{equation}
	Turning to the second-order properties of the spectral increments, let us consider the spectral representation of $u$ and $v$. Second-order stationarity implies that (see \cite{priestley1981spectral} for details)
	\begin{equation}\label{app:uvspectralProp}
		\forall \nu \neq \nu', \begin{cases} \Expe{\mathrm{d}{U}(\nu)\overline{\mathrm{d}{U}(\nu')}} &= 0\\
		\Expe{\mathrm{d}{V}(\nu)\overline{\mathrm{d}{V}(\nu')}} &= 0
		\end{cases}
	\end{equation}
	and autocorrelation functions of $u, v$ read
	\begin{align}
	 	\Expe{u[t]u[t-\tau]} &= \intnu \Expe{\vert \mathrm{d}{U}(\nu)\vert^2}e^{\bmj2\pi\nu\tau}\label{app:autocovu},\\
	 	 \Expe{v[t]v[t-\tau]} &= \intnu \Expe{\vert \mathrm{d}{V}(\nu)\vert^2}e^{\bmj2\pi\nu\tau}.\label{app:autocovv}
	\end{align} 
	The quantity $\Expe{\vert \mathrm{d}{U}(\nu)\vert^2}$ is interpreted as the spectral density $P_{uu}$ of $u$ times $\dnu$. The same result holds for $v$.

	To fully characterize the spectral increments of $x$, we also need the covariance between the spectral increments of $u$ and $v$. Since $u$ and $v$ are jointly second-order stationary,
	\begin{equation}\label{app:uvspectralProp2}
		\forall \nu \neq \nu', \Expe{\mathrm{d}{U}(\nu)\overline{\mathrm{d}{V}(\nu')}} = 0,
	\end{equation}
	and the cross-correlation function reads
	\begin{equation}\label{app:crosscov}
	 	\Expe{u[t]v[t-\tau]} = \intnu\Expe{\mathrm{d}{U}(\nu)\overline{\mathrm{d}{V}(\nu)}}e^{\bmj2\pi\nu\tau}.
	 \end{equation} 
	As a result we have from (\ref{app:uvspectralProp}) and (\ref{app:uvspectralProp2}):
	\begin{align}
		\forall \nu \neq \nu', &\: \Expe{\dX(\nu)\overline{\dX(\nu')}} = 0,\\
		\forall \nu \neq \nu', &\: \Expe{\dX(\nu)\involj{\dX(\nu')}} = 0.
	\end{align}
	When $\nu' = \nu$, the properties are summarized by the spectral density $\Gamma_{xx}(\nu)$
	\begin{equation}
		\Expe{\vert\dX(\nu)\vert^2} + \Expe{\dX(\nu)\involj{\dX(\nu)}}\bmj  = \Gamma_{xx}(\nu)\domega
	\end{equation}
	which separates in quaternion algebra the information contained in the two moments of the spectral increments. This theorem holds also for  quaternion-valued stationary signals by simply adapting the proof. As a corollary, combining (\ref{app:autocovu}), (\ref{app:autocovv}) and (\ref{app:crosscov}) for $\tau = 0$ yields (\ref{eq:powerSpectralDensity}).

\bibliography{references.bib}

\begin{thebibliography}{10}
\providecommand{\url}[1]{#1}
\csname url@samestyle\endcsname
\providecommand{\newblock}{\relax}
\providecommand{\bibinfo}[2]{#2}
\providecommand{\BIBentrySTDinterwordspacing}{\spaceskip=0pt\relax}
\providecommand{\BIBentryALTinterwordstretchfactor}{4}
\providecommand{\BIBentryALTinterwordspacing}{\spaceskip=\fontdimen2\font plus
\BIBentryALTinterwordstretchfactor\fontdimen3\font minus
  \fontdimen4\font\relax}
\providecommand{\BIBforeignlanguage}[2]{{%
\expandafter\ifx\csname l@#1\endcsname\relax
\typeout{** WARNING: IEEEtran.bst: No hyphenation pattern has been}%
\typeout{** loaded for the language `#1'. Using the pattern for}%
\typeout{** the default language instead.}%
\else
\language=\csname l@#1\endcsname
\fi
#2}}
\providecommand{\BIBdecl}{\relax}
\BIBdecl

\bibitem{gonella1972rotary}
J.~Gonella, ``{A rotary-component method for analysing meteorological and
  oceanographic vector time series},'' in \emph{Deep Sea Research and
  Oceanographic Abstracts}, vol.~19, no.~12.\hskip 1em plus 0.5em minus
  0.4em\relax Elsevier, 1972, pp. 833--846.

\bibitem{mooers1973technique}
C.~N. Mooers, ``A technique for the cross spectrum analysis of pairs of
  complex-valued time series, with emphasis on properties of polarized
  components and rotational invariants,'' in \emph{Deep Sea Research and
  Oceanographic Abstracts}, vol.~20, no.~12.\hskip 1em plus 0.5em minus
  0.4em\relax Elsevier, 1973, pp. 1129--1141.

\bibitem{erkmen2006optical}
B.~I. Erkmen and J.~H. Shapiro, ``Optical coherence theory for phase-sensitive
  light,'' in \emph{SPIE Optics+ Photonics}.\hskip 1em plus 0.5em minus
  0.4em\relax International Society for Optics and Photonics, 2006, pp.
  63\,050G--63\,050G.

\bibitem{Ahrabian2013}
A.~Ahrabian, N.~U. Rehman, and D.~Mandic, ``{Bivariate empirical mode
  decomposition for unbalanced real-world signals},'' \emph{IEEE Signal
  Processing Letters}, vol.~20, no.~3, pp. 245--248, 2013.

\bibitem{samson1983pure}
J.~Samson, ``Pure states, polarized waves, and principal components in the
  spectra of multiple, geophysical time-series,'' \emph{Geophysical Journal
  International}, vol.~72, no.~3, pp. 647--664, 1983.

\bibitem{sakkalis2011review}
V.~Sakkalis, ``Review of advanced techniques for the estimation of brain
  connectivity measured with eeg/meg,'' \emph{Computers in biology and
  medicine}, vol.~41, no.~12, pp. 1110--1117, 2011.

\bibitem{priestley1981spectral}
M.~B. Priestley, \emph{{Spectral analysis and time series}}.\hskip 1em plus
  0.5em minus 0.4em\relax Academic press, 1981.

\bibitem{brockwell2013time}
P.~J. Brockwell and R.~A. Davis, \emph{Time series: theory and methods}.\hskip
  1em plus 0.5em minus 0.4em\relax Springer Science \& Business Media, 2013.

\bibitem{Picinbono1997b}
B.~Picinbono and P.~Bondon, ``Second-order statistics of complex signals,''
  \emph{IEEE Transactions on Signal Processing}, vol.~45, no.~2, pp. 411--420,
  Feb 1997.

\bibitem{POA96I}
P.~Amblard, M.~Gaeta, and J.~Lacoume, ``Statistics for complex variables and
  signals—part 1: Variables,'' \emph{Signal Processing}, vol.~53, no.~1, pp.
  1--13, 1996.

\bibitem{POA96II}
------, ``Statistics for complex variables and signals—part 2: Signals,''
  \emph{Signal Processing}, vol.~53, no.~1, pp. 15--25, 1996.

\bibitem{schreier2010statistical}
P.~J. Schreier and L.~L. Scharf, \emph{{Statistical signal processing of
  complex-valued data: the theory of improper and noncircular signals}}.\hskip
  1em plus 0.5em minus 0.4em\relax Cambridge University Press, 2010.

\bibitem{Adali2011}
T.~Adali, P.~J. Schreier, and L.~L. Scharf, ``{Complex-valued signal
  processing: The proper way to deal with impropriety},'' \emph{IEEE
  Transactions on Signal Processing}, vol.~59, no.~11, pp. 5101--5125, 2011.

\bibitem{Schreier2008}
P.~J. Schreier, ``{Polarization ellipse analysis of nonstationary random
  signals},'' \emph{IEEE Transactions on Signal Processing}, vol.~56, no.~9,
  pp. 4330--4339, 2008.

\bibitem{walden2013rotary}
A.~T. Walden, ``{Rotary components, random ellipses and polarization: a
  statistical perspective},'' \emph{Philosophical Transactions of the Royal
  Society of London A: Mathematical, Physical and Engineering Sciences}, vol.
  371, no. 1984, p. 20110554, 2013.

\bibitem{Walden2007}
\BIBentryALTinterwordspacing
A.~T. Walden and T.~Medkour, ``{Ensemble estimation of polarization ellipse
  parameters},'' \emph{Proceedings of the Royal Society A: Mathematical,
  Physical and Engineering Sciences}, vol. 463, no. 2088, pp. 3375--3394, 2007.
  [Online]. Available:
  \url{http://rspa.royalsocietypublishing.org/cgi/doi/10.1098/rspa.2007.0072}
\BIBentrySTDinterwordspacing

\bibitem{Rubin-Delanchy2008}
P.~Rubin-Delanchy and A.~T. Walden, ``{Kinematics of complex-valued time
  series},'' \emph{IEEE Transactions on Signal Processing}, vol.~56, no.~9, pp.
  4189--4198, 2008.

\bibitem{Chandna2011}
S.~Chandna and A.~T. Walden, ``{Statistical properties of the estimator of the
  rotary coefficient},'' \emph{IEEE Transactions on Signal Processing},
  vol.~59, no.~3, pp. 1298--1303, 2011.

\bibitem{Chandna2013}
------, ``{Simulation methodology for inference on physical parameters of
  complex vector-valued signals},'' \emph{IEEE Transactions on Signal
  Processing}, vol.~61, no.~21, pp. 5260--5269, 2013.

\bibitem{brosseau1998fundamentals}
C.~Brosseau, \emph{{Fundamentals of polarized light: a statistical optics
  approach}}.\hskip 1em plus 0.5em minus 0.4em\relax Wiley-Interscience, 1998.

\bibitem{Brosseau1995statistics}
------, ``{Statistics of the normalized Stokes parameters for a Gaussian
  stochastic plane wave field},'' \emph{Applied Optics}, vol.~34, no.~22, pp.
  4788 -- 4793, 1995.

\bibitem{Barakat1987}
\BIBentryALTinterwordspacing
R.~Barakat, ``{Statistics of the Stokes parameters},'' \emph{Journal of the
  Optical Society of America A}, vol.~4, no.~7, pp. 1256--1263, 1987. [Online].
  Available: \url{http://josaa.osa.org/abstract.cfm?URI=josaa-4-7-1256}
\BIBentrySTDinterwordspacing

\bibitem{Barakat1985}
\BIBentryALTinterwordspacing
------, ``{The Statistical Properties of Partially Polarized Light},''
  \emph{Optica Acta: International Journal of Optics}, vol.~32, no.~3, pp.
  295--312, 1985. [Online]. Available:
  \url{http://www.tandfonline.com/doi/abs/10.1080/713821736}
\BIBentrySTDinterwordspacing

\bibitem{Giuli1986}
D.~Giuli, ``{Polarization Diversity in Radars},'' \emph{Proceedings of the
  IEEE}, vol.~74, no.~2, pp. 245--269, 1986.

\bibitem{1609.02463}
J.~Flamant, N.~Le~Bihan, and P.~Chainais. (2016) Time-frequency analysis of
  bivariate signals. ArXiv preprint
  \href{http://arxiv.org/abs/1609.02463}{arXiv:1609.02463}, September 2016.

\bibitem{flamant2017polarization}
------, ``Polarization spectrogram of bivariate signals,'' in \emph{IEEE
  International Conference on Acoustics, Speech, and Signal Processing
  (ICASSP), 2017, New Orleans, USA}, 2017.

\bibitem{conway2003quaternions}
J.~H. Conway and D.~A. Smith, \emph{{On quaternions and octonions: their
  geometry, arithmetic, and symmetry}}, 2003.

\bibitem{bulow2001hypercomplex}
T.~Bulow and G.~Sommer, ``Hypercomplex signals-a novel extension of the
  analytic signal to the multidimensional case,'' \emph{IEEE Transactions on
  Signal Processing}, vol.~49, no.~11, pp. 2844--2852, 2001.

\bibitem{le2014instantaneous}
N.~Le~Bihan, S.~J. Sangwine, and T.~A. Ell, ``Instantaneous frequency and
  amplitude of orthocomplex modulated signals based on quaternion fourier
  transform,'' \emph{Signal Processing}, vol.~94, pp. 308--318, 2014.

\bibitem{Yaglom1962}
A.~M. Yaglom, ``{An Introduction to the Theory of Stationary Random
  Functions},'' p. 235, 1962.

\bibitem{Blanc-Lapierre1953}
A.~Blanc-Lapierre and R.~Fortet, \emph{{Th{\'{e}}orie des Fonctions
  Al{\'{e}}atoires}}.\hskip 1em plus 0.5em minus 0.4em\relax Paris: Masson,
  1953.

\bibitem{vakhania1999random}
N.~N. Vakhania, ``Random vectors with values in quaternion hilbert spaces,''
  \emph{Theory of Probability \& Its Applications}, vol.~43, no.~1, pp.
  99--115, 1999.

\bibitem{amblard2004properness}
P.-O. Amblard and N.~Le~Bihan, ``On properness of quaternion valued random
  variables,'' in \emph{Proc. IMA Conf. Mathematics in Signal Processing},
  2004, pp. 23--26.

\bibitem{via2010properness}
J.~Via, D.~Ramirez, and I.~Santamaria, ``{Properness and Widely Linear
  Processing of Quaternion Random Vectors},'' \emph{IEEE Transactions on
  Information Theory}, vol.~56, no.~7, pp. 3502--3515, 2010.

\bibitem{lebihanproperness15}
N.~Le~Bihan. (2016) The geometry of proper quaternion random variables. ArXiv
  preprint \href{http://arxiv.org/abs/1505.06182}{arXiv:1505.06182}, November
  2016.

\bibitem{Stokes1852}
G.~G. Stokes, ``{On the Change of Refrangibility of Light},''
  \emph{Philosophical Transactions of the Royal Society of London}, vol. 142,
  no. 1852, pp. 463--562, 1852.

\bibitem{born2000principles}
M.~Born and E.~Wolf, \emph{{Principles of optics: electromagnetic theory of
  propagation, interference and diffraction of light}}.\hskip 1em plus 0.5em
  minus 0.4em\relax CUP Archive, 2000.

\bibitem{Poincare1892a}
H.~Poincar{\'{e}}, \emph{{Th{\'{e}}orie math{\'{e}}matique de la lumiere II}},
  1892.

\bibitem{Kikuchi2001}
\BIBentryALTinterwordspacing
N.~Kikuchi, ``{Analysis of signal degree of polarization degradation used as
  control signal for optical polarization mode dispersion compensation},''
  \emph{Lightwave Technology, Journal of}, vol.~19, no.~4, pp. 480--486, 2001.
  [Online]. Available:
  \url{http://ieeexplore.ieee.org/iel5/50/19910/00920845.pdf?tp={\&}isnumber=19910{\&}arnumber=920845{\&}punumber=50}
\BIBentrySTDinterwordspacing

\bibitem{Shirvany2012}
R.~Shirvany, M.~Chabert, and J.~Y. Tourneret, ``{Ship and oil-spill detection
  using the degree of polarization in linear and hybrid/compact dual-pol
  SAR},'' \emph{IEEE Journal of Selected Topics in Applied Earth Observations
  and Remote Sensing}, vol.~5, no.~3, pp. 885--892, 2012.

\bibitem{percival1993spectral}
D.~B. Percival and A.~T. Walden, \emph{Spectral analysis for physical
  applications}.\hskip 1em plus 0.5em minus 0.4em\relax Cambridge University
  Press, 1993.

\bibitem{Thomson1982}
D.~J. Thomson, ``{Spectrum estimation and harmonic analysis},''
  \emph{Proceedings of the IEEE}, vol.~70, no.~9, pp. 1055--1096, 1982.

\bibitem{Walden2000}
A.~T. Walden, ``{A unified view of multitaper multivariate spectral
  estimation},'' \emph{Biometrika}, vol.~87, no.~4, pp. 767--788, 2000.

\bibitem{slepian1978prolate}
D.~Slepian, ``Prolate spheroidal wave functions, fourier analysis, and
  uncertainty—v: The discrete case,'' \emph{Bell System Technical Journal},
  vol.~57, no.~5, pp. 1371--1430, 1978.

\bibitem{Medkour2008}
T.~Medkour and A.~T. Walden, ``{Statistical Properties of the Estimated Degree
  of Polarization},'' \emph{IEEE Transactions on Signal Processing}, vol.~56,
  no.~1, pp. 408--414, 2008.

\bibitem{SantalladelRio2006}
V.~{Santalla del Rio}, J.~M. {Pidre Mosquera}, M.~{Vera Isasa}, and M.~E.
  de~Lorenzo, ``{Statistics of the Degree of Polarization},'' \emph{IEEE
  Transactions on Antennas and Propagation}, vol.~54, no.~7, pp. 2173--2175,
  2006.

\end{thebibliography}
\bibliographystyle{IEEEtran}

\end{document}